\newtheorem{theorem}{Theorem}
\newtheorem{lemma}{Lemma}
\theoremstyle{remark}
\newtheorem*{remark}{Remark}
\begin{document}
\newcommand{\real}{\textrm{Re}\:}
\newcommand{\sto}{\stackrel{s}{\to}}
\newcommand{\supp}{\textrm{supp}\:}
\newcommand{\wto}{\stackrel{w}{\to}}
\newcommand{\ssto}{\stackrel{s}{\to}}
\newcounter{foo}
\providecommand{\norm}[1]{\lVert#1\rVert}
\providecommand{\abs}[1]{\lvert#1\rvert}

\title{Zero Energy Bound States and Resonances in Three--Particle Systems. }

\author{Dmitry K. Gridnev\footnote[1]{On leave from: Institute of Physics, St. Petersburg State
University, Ulyanovskaya 1, 198504 Russia}}

\address{FIAS, Ruth-Moufang-Stra{\ss}e 1, D--60438 Frankfurt am Main, Germany}
\ead{gridnev@fias.uni-frankfurt.de}

\begin{abstract}
We consider a three--particle system in $\mathbb{R}^3$ with non--positive
pair--potentials and non--negative essential spectrum. Under certain
restrictions on potentials it is proved that the eigenvalues are absorbed at
zero energy threshold given that there is
no negative energy bound states and zero energy resonances in particle pairs. It
is shown that the condition on the absence of zero energy resonances in particle
pairs is essential. Namely, we prove that if at least one pair of particles has
a zero energy resonance then a square integrable zero energy ground state of
three particles does not exist. It is also proved that one can tune the coupling
constants of pair potentials so that for any given $R, \epsilon >0$: (a) the
bottom of the essential spectrum is at zero; (b) there is a negative energy
ground state $\psi(\xi)$ such that $\int |\psi(\xi)|^2 d^6 \xi= 1$ and
$\int_{|\xi| \leq R} |\psi(\xi)|^2 d^6 \xi < \epsilon$.
\end{abstract}

\pacs{03.65.Ge, 03.65.Db, 21.45.-v, 67.85.-d, 02.30.Tb}

\maketitle


\section{Introduction}\label{sec:1}

Let us consider an $N$--particle Schr\"odinger operator
\begin{equation}\label{e111}
    H(\lambda) = H_0 - \lambda \sum_{1 \leq i<j \leq N} V_{ij}(r_i - r_j),
\end{equation}
where $\lambda > 0$ is a coupling constant, $H_0$ is a kinetic energy operator
with the centre of mass removed, $r_i \in \mathbb{R}^3$ are particle position
vectors, the pair potentials are real (further restrictions on the potentials
would be given later). Suppose that for $\lambda$  in the vicinity of some
$\lambda_{cr} < \infty$ there is a bound state $\psi(\lambda) \in D(H_0)$ with
the energy $E(\lambda) < \inf \sigma_{ess} (H(\lambda))$ and $E(\lambda) \to
\inf \sigma_{ess}(H(\lambda_{cr}))$ when $\lambda \to \lambda_{cr}$. The
question whether $E(\lambda_{cr})$ is an eigenvalue of $H(\lambda_{cr})$ was
considered in various contexts in \cite{klaus1,klaus2,simon,ostenhof,gest,prl}
(the list of references is by far incomplete).

In \cite{karner}, Theorem~3.3,  it was claimed that if $V_{ij} \in C_0^\infty
(\mathbb{R}^3)$, $V_{ij} \geq 0$, and none of the subsystems has negative energy
bound states or zero energy resonances, then there exists $\psi(\lambda_{cr})
\in D(H) = D(H_0), \psi(\lambda_{cr}) \neq 0$ such that $H(\lambda_{cr})
\psi(\lambda_{cr})= 0$. Unfortunately, the proof in \cite{karner} contains a
mistake. In Eq.~53 of \cite{karner} the mixed term containing first order
derivatives is erroneously omitted, which makes the results of Ref.~35 in
\cite{karner} concerning the fall off of the wave function inapplicable. And it
is not immediately clear how this mistake can be corrected. Here we prove
the result stated by Karner for $N=3$ with a different method and for a larger
class of potentials (Theorem~\ref{th:2} of this paper).

Naturally, a question can be raised whether the condition on the absence of
2--particle zero energy resonances is essential. Here we show that indeed it is.
Namely, in Sec.~\ref{Q@sec:3} (Theorem~\ref{Q@th:1}) we prove that the
3--particle ground state at zero energy can be at most a resonance and not a
$L^2$ state if at least one pair of particles has a resonance at zero energy. 
The method of proof is inspired by \cite{klaus1,sobol,yafaev}. 
The last section provides an example of a 3--particle system, where each
2--particle subsystem is unbound, one 2--particle subsystem is at the
2--particle coupling constant threshold and the whole 3--particle system has a
resonance but not a bound state at zero energy. Systems like this can always be
constructed through appropriate tuning of the coupling constants.

Note, that the 3--particle case differs essentially from the 2--particle case,
where
under similar restrictions on pair potentials the zero energy ground
state cannot be a bound state \cite{klaus1,prl}. This difference
has far reaching physical consequences, which concern the size of
a system in its ground state (we ignore the particle statistics here). In the
two--particle case the size of a system in the ground state can be made infinite
by tuning, for example, the coupling constant so that the bound state with
negative energy approaches the zero energy threshold \cite{prl}. In the three
particle case the size of the system remains finite, given that in the course of
tuning the coupling constants of two--particle subsystems stay away from
critical
values, at which the two--particle zero energy resonances appear. To underline
the
connection with the size of the system we formulate the proofs in terms of
spreading and non--spreading sequences of bound states. The obtained results are relevant 
in the physics of halo nuclei \cite{vaagen}, molecular physics
\cite{fedorov} and Efimov states \cite{efimov}. Here we refer the reader to the last section, where we discuss 
possible physical applications of our results.

The paper is organized as follows. In Sec.~\ref{sec:2} we use the ideas of
Zhislin \cite{zhislin} to set up the framework for the analysis of eigenvalue
absorption in connection with the spreading of sequences of wave functions. Here
we prefer to maintain generality and do not restrict ourselves to $N=3$. In
Sec.~\ref{sec:3} we consider the 3--particle case and employ the equations of
Faddeev type to prove Theorem~\ref{th:2}, which is the main result of this section. In
Sec.~\ref{Q@sec:2} we prove an auxiliary result 
concerning the two--particle zero energy resonance. In Sec.~\ref{Q@sec:3} we
prove Theorem~\ref{Q@th:1}, which says that the ground state of three particles 
cannot be a bound state given that there is no bound states in particle pairs
and at least one particle pair has a zero energy resonance. The last section 
provides a constructive example of a critically bound three--particle system
with such conditions. In the last section we also give the overview of relevant physical phenomena and discuss possible applications of our results.

\section{Spreading and Bound States at Threshold}\label{sec:2}

The main result of this section (Theorem~\ref{th:1}) appears implicitly in
\cite{zhislin}, where Zhislin considers minimizing sequences of the energy
functional in Sobolev spaces. For our purposes it is more useful to consider
sequences of eigenstates and use an approach in the spirit of \cite{simon}.

Consider the $N$-particle Hamiltonian, which depends on a parameter
\begin{eqnarray}
    H(\lambda) = H_0 + V(\lambda) ,  \label{xc31}  \\
V(\lambda) =  \sum_{1 \leq i<j \leq N} V_{ij} (\lambda; r_i - r_j )
\label{:xc31},
\end{eqnarray}
where $H_0$ is the kinetic energy operator with the centre of mass removed, $V_{ij}$ are pair potentials and 
$r_i \in \mathbb{R}^3$ are position vectors.
For the parameter $\lambda$ we assume that $\lambda \in \mathbb{R}$ (this is
done for clarity, in fact, $\lambda$ can take values in a topological space). We
impose the following set of restrictions.
\begin{list}{R\arabic{foo}}
{\usecounter{foo}
    \setlength{\rightmargin}{\leftmargin}}
\item $H(\lambda)$ is defined for an infinite sequence of parameter values
$\lambda_1, \lambda_2, \ldots $ and $\lambda_{cr}$, where $\lim_{n\to \infty}
\lambda_n = \lambda_{cr}$.

\item $|V_{ij} (\lambda; y )| \leq F (y)$ for all $\lambda$ defined in R1, where
$V_{ij}, F \in L^2 (\mathbb{R}^3) + L_\infty^\infty (\mathbb{R}^3)$.

\item $\forall f \in C^\infty_0 (\mathbb{R}^{3N-3} )\colon  \lim_{\lambda_n \to
\lambda_{cr}} \bigl\| \bigl[ V(\lambda_n) - V(\lambda_{cr}) \bigr] f \bigr\| =
0$.
\end{list}
The symbol $L_\infty^\infty$ denotes bounded Borel functions going to zero at
infinity. By R2 $H(\lambda)$ is self-adjoint on $D(H_0 )$ \cite{reed}.

The bottom of the essential spectrum of $H(\lambda)$ is denoted as
\begin{equation}\label{sigmaess}
  E_{thr} (\lambda) := \inf \sigma_{ess} (H(\lambda)) .
\end{equation}

The set of requirements on the system continues as follows
\begin{list}{R\arabic{foo}}
{\usecounter{foo}
    \setlength{\rightmargin}{\leftmargin}}
\setcounter{foo}{3}
\item
for all $ \lambda_n $ there is $E(\lambda_n) \in \mathbb{R}, \psi(\lambda_n) \in
D(H_0)$ such that $H(\lambda_n) \psi(\lambda_n) = E(\lambda_n) \psi(\lambda_n)$,
where $\| \psi(\lambda_n) \| = 1$ and $E(\lambda_n) < E_{thr}(\lambda_n)$.

\item
$\lim_{\lambda_n \to \lambda_{cr}} E(\lambda_n) = \lim_{\lambda_n \to
\lambda_{cr}} E_{thr}(\lambda_n) = E_{thr} (\lambda_{cr})$ .
\end{list}

The requirements R4-5 say that for each $n$ the system has a level below the
continuum and for $\lambda_n \to \lambda_{cr}$ the energy of this level
approaches the bottom of the continuous
spectrum.

In the proofs we shall use the term ``spreading sequence'', which is due to
Zhislin \cite{zhislin}. The sequence of functions $f_n (x) \in L^2
(\mathbb{R}^d)$
\textbf{spreads} if there is $a>0$ such that $\limsup_{n \to \infty} \|
\chi_{\{x||x| > R\}} f_n \| > a$ for all $R>0$. (the notation $\chi_{\Omega}$
always
means the characteristic function of the set $\Omega$). The sequence $f_n$ is
called
\textbf{totally spreading}  if $\lim_{n \to \infty} \|  \chi_{\{x||x| \leq R\}}
f_n \| = 0$ for all $R >0$. Note that any normalized sequence,
which converges in norm, does not spread, and any sequence, which goes to zero
in norm, totally spreads.

\begin{lemma}\label{lem:1}
Let $H(\lambda)$ be a Hamiltonian satisfying R1-5. Then
\begin{equation}
\sup_n \| H_0 \psi(\lambda_n)\| < \infty .
\end{equation}
\end{lemma}

\begin{proof}
The statement represents a
well-known fact, see e. g.  \cite{zhislin} but for
completeness we give the proof right here.
The Schr\"odinger equation $H_0 \psi (\lambda_n)= -V(\lambda_n)\psi (\lambda_n)
+ E(\lambda_n)
\psi(\lambda_n)$ gives the bound $\| H_0 \psi (\lambda_n) \| \leq \| V(\lambda_n
)\psi (\lambda_n) \| + \mathcal{O}(1)$. It remains to show that $\| V(\lambda_n
)\psi (\lambda_n) \| = \mathcal{O}(1)$.
By R2 $|V_{ij}| \leq F_{ij}$, where for a shorter notation we denote $F_{ij} :=
F(r_i - r_j)$. Using
that $F_{ij}$ is $H_0$--bounded
with a relative bound 0 (see vol.2, Theorem X.16 in \cite{reed}) we obtain
\begin{eqnarray}
    \|V(\lambda_n) \psi (\lambda_n)\| = \bigl\| \sum_{i<j} V_{ij} (\lambda_n;
r_i - r_j )  \psi (\lambda_n) \bigr\|
    \leq \sum_{i<j} \bigl\| F_{ij}  \psi (\lambda_n) \bigr\|\nonumber\\
 \leq a \|H_0 \psi (\lambda_n)\| + b \leq
    a \| V(\lambda_n)\psi (\lambda_n)\| + \mathcal{O}(1) , \label{kuka:1}
\end{eqnarray}
where $a, b >0$ are constants and $a$ can be
chosen as small as pleased. Setting, for example, $a = 1/2$ gives $\|
V(\lambda_n
)\psi (\lambda_n) \| = \mathcal{O}(1)$.  \end{proof}

The following theorem illustrates the connection between non-spreading and bound
states at threshold.

\begin{theorem}[Zhislin]\label{th:1}
Let $H(\lambda)$ satisfy R1-5. If the sequence  $\psi (\lambda_n)$ does not
totally spread then $H(\lambda_{cr})$ has a  bound state at threshold $\psi_{cr}
\in D(H_0)$, so that
\begin{equation}\label{xc11}
H (\lambda_{cr}) \psi_{cr} = E_{thr}(\lambda_{cr}) \psi_{cr} ,
\end{equation}.
\end{theorem}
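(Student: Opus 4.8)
The plan is to run a weak--compactness argument on the normalized eigenstates: extract a weak limit, show it is nonzero precisely because the sequence does not totally spread, and then pass to the limit in the eigenvalue equation to identify the limit as a threshold bound state.

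First I would exploit Lemma~\ref{lem:1}. It shows that $\psi(\lambda_n)$ is bounded in the graph norm of $H_0$; since $H_0$ is, after a fixed linear change to Jacobi coordinates, a constant--coefficient elliptic operator with $D(H_0)=H^2(\mathbb{R}^{3N-3})$, this means $\{\psi(\lambda_n)\}$ is bounded in $H^2$, so both $\{\psi(\lambda_n)\}$ and $\{H_0\psi(\lambda_n)\}$ are bounded in $L^2$. By the Banach--Alaoglu theorem I pass to a subsequence (not relabeled) along which $\psi(\lambda_n)\wto\psi_{cr}$ and $H_0\psi(\lambda_n)\wto\phi$ weakly in $L^2$. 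Because $H_0$ is closed, its graph is a closed linear subspace, hence weakly closed, so the weak limit of the pairs $(\psi(\lambda_n),H_0\psi(\lambda_n))$ again lies in the graph: $\psi_{cr}\in D(H_0)$ and $H_0\psi_{cr}=\phi$.

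The crucial step, and the one I expect to be the main obstacle, is to prove $\psi_{cr}\neq 0$; this is exactly where the hypothesis that $\psi(\lambda_n)$ does not totally spread is used. Negating the definition of total spreading yields some $R>0$ and a constant $c>0$ such that $\|\chi_{\{|x|\le R\}}\psi(\lambda_n)\|\ge c$ along a further subsequence. Restricted to the ball $\{|x|\le R\}$ the sequence is bounded in $H^1$, so by the Rellich--Kondrachov theorem the embedding into $L^2(\{|x|\le R\})$ is compact, and I extract a subsequence converging strongly in $L^2$ on this ball. This strong local limit must agree with the global weak limit $\psi_{cr}$, whence $\|\chi_{\{|x|\le R\}}\psi_{cr}\|=\lim_n\|\chi_{\{|x|\le R\}}\psi(\lambda_n)\|\ge c>0$, so $\psi_{cr}\neq 0$.

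It remains to pass to the limit in $H(\lambda_n)\psi(\lambda_n)=E(\lambda_n)\psi(\lambda_n)$. Testing against $f\in C_0^\infty(\mathbb{R}^{3N-3})$ and using that $V(\lambda_n)$ is a real multiplication operator, I rewrite the equation as $\langle H_0 f,\psi(\lambda_n)\rangle+\langle V(\lambda_n)f,\psi(\lambda_n)\rangle=E(\lambda_n)\langle f,\psi(\lambda_n)\rangle$. The kinetic and right--hand terms converge by weak convergence together with R5, which gives $E(\lambda_n)\to E_{thr}(\lambda_{cr})$. For the potential term, R3 gives $V(\lambda_n)f\to V(\lambda_{cr})f$ strongly in $L^2$, and pairing strong convergence against the weak convergence $\psi(\lambda_n)\wto\psi_{cr}$ yields $\langle V(\lambda_n)f,\psi(\lambda_n)\rangle\to\langle V(\lambda_{cr})f,\psi_{cr}\rangle$. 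Since $\psi_{cr}\in D(H_0)\subset D(V(\lambda_{cr}))$ by R2, I move the operators back onto $\psi_{cr}$ to obtain $\langle f,(H_0+V(\lambda_{cr}))\psi_{cr}\rangle=E_{thr}(\lambda_{cr})\langle f,\psi_{cr}\rangle$ for all $f\in C_0^\infty$; density then upgrades this to the strong identity $H(\lambda_{cr})\psi_{cr}=E_{thr}(\lambda_{cr})\psi_{cr}$, completing the proof.
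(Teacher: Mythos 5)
Your proof follows essentially the same route as the paper's: extract a weak limit by Banach--Alaoglu, use a local compactness argument to show the limit is nonzero because the sequence does not totally spread, and pass to the limit in the weak form of the eigenvalue equation using R3 and R5. The minor technical variations are harmless and arguably cleaner: you identify $\psi_{cr}\in D(H_0)$ via weak closedness of the graph of $H_0$ where the paper's Lemma~\ref{lem:2} argues through uniqueness of weak limit points, and you invoke Rellich--Kondrachov on a ball where the paper uses relative $H_0$--compactness of $\chi_{\{|x|\le R\}}$ (Lemmas~\ref{lem:2} and~\ref{lem:3}). The limit passage in the eigenvalue equation is identical in substance to the paper's computation.

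There is, however, one genuine slip in the order of your subsequence extractions. You first pass to a weakly convergent subsequence and only afterwards claim that, by negating total spreading, a further subsequence satisfies $\|\chi_{\{|x|\le R\}}\psi(\lambda_n)\|\ge c>0$. The hypothesis only controls the full sequence: it says $\limsup_n\|\chi_{\{|x|\le R\}}\psi(\lambda_n)\|>0$ for some $R$, which guarantees such a lower bound along \emph{some} subsequence of the original sequence, not along an arbitrary previously chosen one. A weakly convergent subsequence selected blindly by Banach--Alaoglu may itself totally spread (think of a sequence alternating between a fixed localized profile and a translate escaping to infinity), and then your ``further subsequence'' does not exist and the nontriviality argument collapses. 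The fix is trivial and is exactly what the paper does: first select the subsequence with $\|\chi_{\{|x|\le R\}}\psi(\lambda_n)\|>a$, and only then extract a weakly convergent sub-subsequence from it; since the theorem asserts only the existence of some threshold eigenfunction, working along this sub-subsequence suffices. With that reordering your argument is complete and matches the paper's.
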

For the proof we need a couple of technical Lemmas.

\begin{lemma}\label{lem:2}
Suppose $f_n \in D(H_0)$ is such that $\sup_n \| H_0 f_n\| < \infty$ and $f_n
\wto f_0$. Then (a) $f_0 \in D(H_0)$; (b) for any operator $A$, which is
relatively $H_0$--compact $\| A ( f_n - f_0 ) \| \to 0$.
\end{lemma}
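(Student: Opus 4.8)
The plan is to handle the two parts separately, using throughout that $H_0$ is a non-negative self-adjoint (hence closed) operator, so that $(H_0 + i)^{-1}$ is everywhere defined and bounded with $\|(H_0+i)^{-1}\| \leq 1$.

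For part (a) I would argue by weak compactness together with closedness of the graph. Since $\sup_n \|H_0 f_n\| < \infty$, the bounded sequence $H_0 f_n$ has a subsequence $H_0 f_{n_k} \wto g$ for some $g \in \mathcal{H}$, while $f_{n_k} \wto f_0$ as a subsequence of a weakly convergent sequence. The graph $G(H_0) = \{ (f, H_0 f) \colon f \in D(H_0)\}$ is a closed linear subspace of $\mathcal{H} \oplus \mathcal{H}$, and a closed subspace is weakly closed; since $(f_{n_k}, H_0 f_{n_k}) \wto (f_0, g)$ in $\mathcal{H} \oplus \mathcal{H}$, the limit $(f_0, g)$ lies in $G(H_0)$. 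In particular $f_0 \in D(H_0)$ (and moreover $H_0 f_0 = g$). A single subsequence already establishes membership in the domain, so no diagonal argument is needed.

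For part (b) set $g_n := f_n - f_0$. By part (a) we have $g_n \in D(H_0)$, while $g_n \wto 0$, $\sup_n \|g_n\| < \infty$ (weakly convergent sequences are bounded), and $\sup_n \|H_0 g_n\| < \infty$. Recalling that relative $H_0$-compactness of $A$ means $K := A(H_0+i)^{-1}$ is compact, I factor $A g_n = K h_n$ with $h_n := (H_0 + i) g_n$. The sequence $h_n$ is norm-bounded, and it converges weakly to zero: for every $\phi$ in the dense domain $D(H_0)$, self-adjointness gives $\langle h_n, \phi \rangle = \langle g_n, (H_0 - i)\phi \rangle \to 0$, and the uniform bound on $\|h_n\|$ upgrades this to $\langle h_n, \phi \rangle \to 0$ for all $\phi \in \mathcal{H}$. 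Since a compact operator carries weakly convergent sequences into norm-convergent ones, $\|A(f_n - f_0)\| = \|K h_n\| \to 0$.

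The genuinely delicate point, which I expect to cost the most care, is precisely the weak convergence $h_n \wto 0$ in part (b): because $H_0$ is unbounded it does not act continuously on weakly convergent sequences, so one cannot simply pass $H_0$ through the weak limit. The remedy is to test only against the dense set $D(H_0)$, transfer $H_0$ onto the test vector by self-adjointness so that $g_n$ is paired with the fixed vector $(H_0 - i)\phi$, and only then extend from the dense set to all of $\mathcal{H}$ using the uniform norm bound on $h_n$. Everything else — the weak-compactness extraction in (a) and the compactness step in (b) — is standard once this is in place.
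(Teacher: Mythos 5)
Your proof is correct, and part (b) is essentially the paper's argument: factor $A(f_n-f_0)$ through the resolvent as a compact operator applied to a bounded weakly null sequence. The only genuine divergence is in part (a) and in how the weak nullity of the resolvent-side sequence is obtained. The paper first proves that the \emph{whole} sequence $H_0 f_n$ converges weakly, by showing two distinct weak limit points $\phi_1,\phi_2$ would give $(\phi_1-\phi_2,g)=0$ for all $g$ in the dense set $D(H_0)$; it then identifies the limit as $H_0 f_0$ via self-adjointness, and this full-sequence convergence is what lets it assert $(H_0+1)(f_n-f_0)\wto 0$ in part (b) with no further work. You instead extract a single weakly convergent subsequence of $H_0 f_n$ and invoke weak closedness of the (closed, convex) graph $G(H_0)$, which suffices for $f_0\in D(H_0)$ but does not by itself give convergence of the whole sequence; you then compensate in part (b) by re-deriving $h_n=(H_0+i)(f_n-f_0)\wto 0$ directly, pairing against $\phi\in D(H_0)$, moving $H_0$ onto the test vector by self-adjointness, and extending by density using the uniform bound on $\|h_n\|$. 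Both routes are sound and of comparable length; yours isolates the graph-closedness abstraction and keeps (a) minimal, while the paper front-loads the full weak convergence of $H_0 f_n$ so that (b) becomes a one-line factorization. Your closing remark correctly identifies the one point where care is needed -- one cannot pass the unbounded $H_0$ through a weak limit -- and your density-plus-uniform-bound remedy is exactly right.
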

\begin{proof}
First, let us prove that the sequence $H_0 f_n $ is weakly convergent. A proof
is by contradiction. Suppose $H_0 f_n $ has two weak limit points, {\it i.e.}
there exist $f'_k, f''_k$, which are subsequences of $f_n$ and for which $H_0
f'_k \wto \phi_1$ and $H_0 f''_k \wto \phi_2$, where $\phi_{1,2} \in L^2$ and
$\phi_1 \neq \phi_2$. On one hand, because $\phi_1 \neq \phi_2$ and $D(H_0)$ is
dense in $L^2$ there is $g \in D(H_0)$ such that $(\phi_1 - \phi_2 , g) \neq 0$.
On the other hand, using that $f'_k \wto f_0$ and  $f''_k \wto f_0$ we get
\begin{equation}\label{:labi:}
    (\phi_1 - \phi_2 , g) = \lim_{k \to \infty} \left[ \bigl( H_0 (f'_k -
f''_k), g\bigr)\right] =
    \lim_{k \to \infty} \left[ \bigl( (f'_k - f''_k), H_0 g\bigr)\right] = 0,
\end{equation}
a contradiction. Hence, $H_0 f_n \wto G$, where $G \in L^2$. $\forall f \in
D(H_0)$ by self-adjointness of $H_0$ we obtain $(H_0 f, f_0) = \lim_{n \to
\infty} (H_0 f, f_n) = (f, G)$. Thus $f_0 \in D(H_0)$ and $G = H_0 f_0$, which
proves (a). To prove (b) note that $(H_0 + 1)(f_n - f_0 ) \wto 0$. Using that
compact operators acting on weakly convergent sequences make them converge in
norm we get
\begin{equation}\label{fck}
A(f_n - f_0 )  = A(H_0 + 1)^{-1} (H_0 + 1) (f_n - f_0 )\to 0 ,
\end{equation}
since $A(H_0 + 1)^{-1}$ is compact by condition of the lemma.   \end{proof}

\begin{lemma}\label{lem:3}
Suppose $f_n \in D(H_0)$ is such that $\sup_n \| H_0 f_n\| < \infty$ and $f_n
\wto f_0$.
Then (a) if $f_n$ does not spread then $f_n \to f_0$ in norm; (b) if $f_n$ does
not totally spread then $f_0 \neq 0$.
\end{lemma}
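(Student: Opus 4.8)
The plan is to prove Lemma~\ref{lem:3} by exploiting the uniform bound $\sup_n \|H_0 f_n\| < \infty$ together with weak convergence $f_n \wto f_0$, using the characteristic-function cutoffs that define spreading. Throughout I write $\chi_R^{in} := \chi_{\{x \mid |x| \leq R\}}$ and $\chi_R^{out} := \chi_{\{x \mid |x| > R\}}$, so that $\chi_R^{in} + \chi_R^{out} = 1$. The central tool is Lemma~\ref{lem:2}(b): since multiplication by $\chi_R^{in}$ is a bounded operator that is relatively $H_0$-compact (the resolvent $(H_0+1)^{-1}$ maps into a Sobolev space, and restricting to the bounded ball $|x| \le R$ gives a compact operator by Rellich's theorem), we have $\|\chi_R^{in}(f_n - f_0)\| \to 0$ as $n \to \infty$ for every fixed $R$.

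\medskip

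For part (a), suppose $f_n$ does not spread. By the definition of spreading, the negation means that for every $a > 0$ there exists $R > 0$ such that $\limsup_{n \to \infty} \|\chi_R^{out} f_n\| \leq a$. First I would fix $\epsilon > 0$ and use this to choose $R$ so large that $\limsup_n \|\chi_R^{out} f_n\| < \epsilon$. Since $f_0 \in L^2$ we may also enlarge $R$ (if needed) so that $\|\chi_R^{out} f_0\| < \epsilon$. Then I split
\begin{equation}
\|f_n - f_0\| \leq \|\chi_R^{in}(f_n - f_0)\| + \|\chi_R^{out} f_n\| + \|\chi_R^{out} f_0\|.
\end{equation}
The first term tends to $0$ by Lemma~\ref{lem:2}(b), while the last two are each eventually bounded by roughly $\epsilon$. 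Taking $\limsup_{n \to \infty}$ gives $\limsup_n \|f_n - f_0\| \lesssim \epsilon$, and since $\epsilon$ was arbitrary we conclude $f_n \to f_0$ in norm.

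\medskip

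For part (b), suppose $f_n$ does not totally spread, meaning it is \emph{not} the case that $\|\chi_R^{in} f_n\| \to 0$ for all $R$. Hence there exists some $R_0 > 0$ and some $a > 0$ with $\limsup_{n \to \infty} \|\chi_{R_0}^{in} f_n\| \geq a > 0$, so along a subsequence $\|\chi_{R_0}^{in} f_n\| \geq a/2$. By Lemma~\ref{lem:2}(b) applied with $A = \chi_{R_0}^{in}$, we have $\|\chi_{R_0}^{in}(f_n - f_0)\| \to 0$, hence $\|\chi_{R_0}^{in} f_n\| \to \|\chi_{R_0}^{in} f_0\|$ along that subsequence. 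Therefore $\|\chi_{R_0}^{in} f_0\| \geq a/2 > 0$, which forces $f_0 \neq 0$.

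\medskip

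The main obstacle is justifying the relative $H_0$-compactness of the cutoff $\chi_R^{in}$ so that Lemma~\ref{lem:2}(b) applies; this is where the finite-ball Rellich--Kondrachov embedding enters, and one must confirm that $\chi_R^{in}(H_0+1)^{-1}$ is genuinely compact on the reduced configuration space $\mathbb{R}^{3N-3}$. Once that is in hand, both parts are short, with part (a) reducing to the standard tightness-plus-local-convergence argument and part (b) being essentially immediate.
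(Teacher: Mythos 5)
Your proposal is correct and follows essentially the same route as the paper: both parts rest on the relative $H_0$-compactness of the ball cutoff $\chi_{\{x\mid |x|\le R\}}$ combined with Lemma~\ref{lem:2}(b), with your part (a) merely spelling out the $\epsilon$/three-term splitting that the paper leaves implicit and your part (b) phrasing directly what the paper does by contradiction.
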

\begin{proof}
Let us start with (a).
Because $f_n$ does not spread it is enough to show that $\| \chi_{\{x| |x| \leq
R\}} (f_n - f_0) \| \to 0$ for all $R$ in norm. And this is true because
$\chi_{\{x| |x| \leq R\}}$ is relatively $H_0$--compact \cite{reed,teschl} and
Lemma~\ref{lem:2} applies. To prove (b) let us assume by contradiction that $f_n
\wto 0$. Using the same arguments we get that $\| \chi_{\{x| |x| \leq R\}} f_n
\| \to 0$ for all $R$. But this would mean that $f_n$ totally spreads contrary
to the condition of the Lemma.   \end{proof}

\begin{proof}[Proof of Theorem~\ref{th:1}]
Without loosing generality we can assume that there are $a, R > 0$ such that $\|
\chi_{\{x| |x| < R\}} \psi(\lambda_n) \| > a$
(otherwise we can pass to an appropriate subsequence, since $\psi (\lambda_n)$
does not totally spread). By the
Banach-Alaoglu theorem we choose a weakly convergent subsequence so that
$\psi(\lambda_{n_k}) \wto
\psi_{cr}$, where $\psi_{cr} \in D(H_0)$ by Lemma~\ref{lem:2}.
$\psi(\lambda_{n_k})$ does not totally spread and is weakly
convergent, hence,  by Lemma~\ref{lem:3}(b) $\psi_{cr} \neq 0$. For any $f \in
C_{0}^\infty$ we have
\begin{eqnarray}
\fl \Bigl([H(\lambda_{cr}) - E_{thr}(\lambda_{cr})]f , \psi_{cr} \Bigr) =
\lim_{\lambda_{n_k} \to \lambda_{cr}} \Bigl( [H(\lambda_{cr})
-E_{thr}(\lambda_{n_k})]f , \psi (\lambda_{n_k}) \Bigr) \nonumber\\
\fl = \lim_{\lambda_{n_k} \to \lambda_{cr}} \Bigl( \bigl[ H(\lambda_{n_k}) -
(V(\lambda_{n_k})-V(\lambda_{cr}) )  -E_{thr}(\lambda_{n_k}) \bigr] f ,\psi
(\lambda_{n_k}) \Bigr)  \nonumber \\
\fl = \lim_{\lambda_{n_k} \to \lambda_{cr}} \Bigl\{ \bigl[E(\lambda_{n_k})
-E_{thr}(\lambda_{n_k}) \bigr] \Bigl( f ,\psi (\lambda_{n_k}) \Bigr)  -
\Bigl([V(\lambda_{n_k})-V(\lambda_{cr}) ]f ,\psi (\lambda_{n_k})\Bigr) \Bigr\} =
0, 
\end{eqnarray}
where in the last equation we have used R3, R5.
Summarizing, for all $f \in C_{0}^\infty$ we have
\begin{equation}\label{xc12}
    \left(\bigl[H(\lambda_{cr}) - E_{thr}(\lambda_{cr})\bigr]f , \psi_{cr}
\right) =  \left(f , \bigl[H(\lambda_{cr}) -
E_{thr}(\lambda_{cr})\bigr]\psi_{cr} \right) = 0,
\end{equation}
meaning that (\ref{xc11}) holds.  \end{proof}

The following Lemmas will be needed in the next Section.
\begin{lemma}\label{lem:4}
 A uniformly norm--bounded sequence of functions $f_n \in L^2 (\mathbb{R}^n)$
having a property that
every weakly convergent subsequence converges also in norm  does not
spread.
\end{lemma}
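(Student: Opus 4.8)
The plan is to argue by contradiction, converting a failure of ``does not spread'' into the construction of a weakly convergent subsequence whose behaviour is incompatible with norm convergence. First I would unwind the definitions: to say that $f_n$ \emph{does not spread} is precisely to say that for every $a>0$ there is an $R>0$ with $\limsup_{n\to\infty}\norm{\chi_{\{x||x|>R\}} f_n}\le a$. So suppose, for contradiction, that $f_n$ \emph{does} spread, i.e.\ there is a fixed $a>0$ such that $\limsup_{n\to\infty}\norm{\chi_{\{x||x|>R\}} f_n}>a$ for \emph{every} $R>0$.

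Next I would build a subsequence that carries mass out to infinity at a definite rate. Applying the spreading hypothesis with $R=j$ for $j=1,2,\dots$, and using that a $\limsup$ exceeding $a$ forces infinitely many indices to exceed $a$, I would select inductively $m_1<m_2<\cdots$ with $\norm{\chi_{\{x||x|>j\}} f_{m_j}}>a$ for each $j$. This diagonal choice produces radii $R_j:=j\to\infty$ together with a subsequence $f_{m_j}$ whose tail beyond $R_j$ has norm bounded below by $a$.

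Now I would invoke the hypothesis. The subsequence $f_{m_j}$ is uniformly norm--bounded, so by the Banach--Alaoglu theorem (reflexivity of $L^2$) it has a weakly convergent sub--subsequence $f_{m_{j_l}}\wto f_0$; by assumption this sub--subsequence, being a weakly convergent subsequence of the original $f_n$, converges in norm, $f_{m_{j_l}}\to f_0$ with $f_0\in L^2$. The contradiction then comes from splitting the tail: since multiplication by $\chi_{\{x||x|>R\}}$ is a contraction,
\begin{equation}
\norm{\chi_{\{x||x|>R_{j_l}\}} f_{m_{j_l}}} \le \norm{f_{m_{j_l}} - f_0} + \norm{\chi_{\{x||x|>R_{j_l}\}} f_0}.
\end{equation}
The first term vanishes as $l\to\infty$ by norm convergence, and the second vanishes because $R_{j_l}\to\infty$ and $f_0\in L^2$ (dominated convergence). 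Hence the left--hand side tends to $0$, contradicting the lower bound $\norm{\chi_{\{x||x|>R_{j_l}\}} f_{m_{j_l}}}>a$ retained from the construction.

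The only genuinely delicate point is the bookkeeping in the diagonal step: the spreading condition controls the $\limsup$ over the \emph{full} sequence at each fixed radius, whereas I need a \emph{single} subsequence along which the escaping radius $R_j$ grows without bound while the tail mass stays above $a$. Getting these two requirements to coexist is exactly what the inductive choice of $m_j$ (one index per radius, each larger than all previous ones) accomplishes; everything else is routine. I expect no difficulty from the weak--to--norm passage, which is handed to us directly by the hypothesis.
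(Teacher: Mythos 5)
Your proof is correct and follows essentially the same route as the paper: extract a subsequence whose tail mass beyond radius $j$ stays above $a$, apply Banach--Alaoglu and the hypothesis to get norm convergence, and derive a contradiction. The only difference is that you spell out (via the triangle inequality and the vanishing tails of the $L^2$ limit) the step the paper dismisses as ``easy to see,'' namely that such a subsequence cannot converge in norm.
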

\begin{proof}
By contradiction, let us assume that $f_n$ spreads. Then it is
possible to extract a subsequence $g_k = f_{n_k}$ with the
property $\| \chi_{\{x| |x| \geq k \}} g_k \| > a$, where $a > 0$
is a constant. On one hand, it is easy to see that $g_k $ has no subsequences
that converge in
norm. On the other hand, by the Banach-Alaoglu theorem $g_k$ must have at least
one  weakly converging subsequence, which is norm--convergent by condition of
the lemma. A contradiction. \end{proof}

\begin{lemma}\label{lem:5}
Suppose $g \in C (\mathbb{R}^{3N-3})$ has the property that $|g|\leq 1$ and
$g=0$ if $|r_i - r_j | < \delta |x|$, where $\delta$ is a constant. Then the
operator $gF(r_i - r_j)$ is relatively $H_0$--compact.
\end{lemma}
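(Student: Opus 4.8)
The plan is to show that $gF(H_0+1)^{-1}$ is compact, which is what relative $H_0$--compactness means in the sense used above. First I would pass to Jacobi--type coordinates $x=(x_1,x_2)$ with $x_1\in\mathbb{R}^3$ proportional to $r_i-r_j$ and $x_2\in\mathbb{R}^{3N-6}$ the remaining relative coordinates, so that $H_0=-\Delta_{x_1}-\Delta_{x_2}$ up to harmless mass factors and $F(r_i-r_j)$ becomes a function of $x_1$ alone. The decisive geometric consequence of the hypothesis is that on $\supp g$ the inequality $|r_i-r_j|\geq\delta|x|$ forces $|x|\leq C|x_1|$ for a constant $C$; i.e. $\supp g$ lies inside a cone about the $x_1$--axis, and in particular $|x_1|\geq c|x|\to\infty$ on $\supp g$. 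I would stress here the \emph{main obstacle}: the factor $F(r_i-r_j)$ by itself is constant along $x_2$ and is therefore \emph{not} relatively compact, so the cutoff $g$ is genuinely needed; yet one cannot simply invoke a Hilbert--Schmidt bound in dimension $3N-3$, because the cone's cross--section grows like $|x_1|^{3N-6}$ and $gF$ need not lie in $L^2(\mathbb{R}^{3N-3})$ even when $F\in L^2(\mathbb{R}^3)$.

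Following R2 I would split $F=F_2+F_\infty$ with $F_2\in L^2(\mathbb{R}^3)$ and $F_\infty\in L_\infty^\infty(\mathbb{R}^3)$, and treat the two pieces separately. For the bounded part, $gF_\infty$ is bounded and vanishes at infinity in configuration space, since on $\supp g$ one has $|x_1|\geq c|x|\to\infty$ and hence $F_\infty(x_1)\to0$, while off $\supp g$ it is zero. Approximating it uniformly by the truncations $gF_\infty\,\chi_{\{|x|\leq R\}}$, the tail error obeys $\|(gF_\infty-gF_\infty\chi_{\{|x|\leq R\}})(H_0+1)^{-1}\|\leq(\sup_{|x|>R}|gF_\infty|)\,\|(H_0+1)^{-1}\|\to0$, while each truncation is a bounded multiplication operator composed with $\chi_{\{|x|\leq R\}}(H_0+1)^{-1}$, which is compact by the fact already used above (see \cite{reed,teschl}) that $\chi_{\{|x|\leq R\}}$ is relatively $H_0$--compact. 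Hence $gF_\infty(H_0+1)^{-1}$ is a norm limit of compact operators and is compact.

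For the $L^2$ part the first step is the norm estimate
\[
\|W(H_0+1)^{-1}\|\leq C\|W\|_{L^2(\mathbb{R}^3)},
\]
valid for any $W=W(x_1)$ with $W\in L^2(\mathbb{R}^3)$, which I would obtain by a partial Fourier transform in $x_2$: the operator is the direct integral over $p_2$ of the fibre operators $W(-\Delta_{x_1}+\mu^2)^{-1}$ with $\mu=\sqrt{|p_2|^2+1}\geq1$, each Hilbert--Schmidt with Hilbert--Schmidt norm $\|W\|_{L^2}\|K_\mu\|_{L^2}$, where $K_\mu(y)=e^{-\mu|y|}/(4\pi|y|)$ is the kernel of $(-\Delta+\mu^2)^{-1}$ on $\mathbb{R}^3$ and $\|K_\mu\|_{L^2}=(8\pi\mu)^{-1/2}\leq(8\pi)^{-1/2}$, so the fibre norms are uniformly bounded by $C\|W\|_{L^2}$. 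Now approximate $F_2$ in $L^2(\mathbb{R}^3)$ by bounded functions $F_2^{(k)}$ supported in $\{|x_1|\leq k\}$; the remainder obeys $\|g(F_2-F_2^{(k)})(H_0+1)^{-1}\|\leq C\|F_2-F_2^{(k)}\|_{L^2(\mathbb{R}^3)}\to0$, using $|g|\leq1$. For the main term the cone geometry finally pays off: since $gF_2^{(k)}$ is supported where both $|x_1|\leq k$ and $|x|\leq C|x_1|$, it is actually supported in $\{|x|\leq Ck\}$ and is bounded, so $gF_2^{(k)}(H_0+1)^{-1}=(gF_2^{(k)})\,\chi_{\{|x|\leq Ck\}}(H_0+1)^{-1}$ is a bounded operator times a compact one, hence compact. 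Letting $k\to\infty$ shows $gF_2(H_0+1)^{-1}$ is compact, and adding the two pieces proves the Lemma. The only delicate point is precisely this interplay — the cutoff converts an $L^2(\mathbb{R}^3)$ statement in three dimensions into compact support in $\mathbb{R}^{3N-3}$ — and once the norm bound above is established the remainder of the argument is routine.
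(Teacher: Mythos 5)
Your proof is correct and follows essentially the same route as the paper: your fibre--wise Hilbert--Schmidt bound $\|W(x_1)(H_0+1)^{-1}\|\leq C\|W\|_{L^2(\mathbb{R}^3)}$ is exactly the paper's integral estimate for the far piece $\chi_{\{|r_i-r_j|\geq k\}}gF_{ij}(H_0+1)^{-1}$, and the cone condition converting $|r_i-r_j|<k$ into compact support in $\mathbb{R}^{3N-3}$ is the same mechanism that makes the near piece compact there. Your additional truncation of $F_2$ to bounded functions (so that the near piece becomes a bounded multiplier times $\chi_{\{|x|\leq Ck\}}(H_0+1)^{-1}$) is a slightly more careful rendering of the step the paper delegates to Teschl's Lemma~7.11, not a different argument.
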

\begin{proof}
It suffices to consider the case $F\in L^2 (\mathbb{R}^3)$ (the case $F \in
L_\infty^\infty (\mathbb{R}^3)$ trivially follows from Lemma~7.11 in
\cite{teschl}). For $k=1,2,\ldots$ we can write
\begin{equation}\label{muta}
\fl gF_{ij}(H_0 + 1)^{-1} = \chi_{\{x|\: |r_i - r_j | <k\}}gF_{ij}(H_0 + 1)^{-1} +
\chi_{\{x|\: |r_i - r_j | \geq k\}}gF_{ij}(H_0 + 1)^{-1} ,
\end{equation}
where again $F_{ij}:= F(r_i - r_j)$.
The first operator on the rhs is compact (Lemma~7.11 in \cite{teschl}). We need
to show that the second one goes to zero in norm when $k \to \infty$ (in this
case the operator on the lhs is compact as a norm-limit of compact operators).
The following integral estimate of the square of its norm is trivial
\begin{equation}\label{muta2}
\fl \bigl\| \chi_{\{x|\: |r_i - r_j | \geq k\}}gF_{ij}(H_0 + 1)^{-1}\bigr\|^2 \leq
\frac 1{(4\pi)^2}
\int_{|r|\geq k} d^3 r\;  |F(r)|^2 \int d^3 r' \; \frac{e^{-2|r'|}}{|r'|^2} .
\end{equation}
Because $F\in L^2 (\mathbb{R}^3)$ the rhs goes to zero as $k \to \infty$.
\end{proof}

\section{Zero Energy Bound States of Three Particles}\label{sec:3}

We apply the framework of Sec.~\ref{sec:2} to the system of three particles with
non-positive potentials. The case $N>3$ and potentials taking both signs would
be considered elsewhere. For simplicity we take the parameter $\lambda > 0$ as a
coupling constant of the interaction (see \cite{klaus1,klaus2})
\begin{eqnarray}
    H(\lambda) = H_0 - \lambda V  , \label{xc31aa} \\
V =  \sum_{1 \leq i<j \leq 3} V_{ij} (r_i - r_j ) . \label{:xc31aa}
\end{eqnarray}
We shall need the following additional requirements
\begin{list}{R\arabic{foo}}
{\usecounter{foo}
    \setlength{\rightmargin}{\leftmargin}}
\setcounter{foo}{5}
\item
$V_{ij} \geq 0$ and $\lambda V_{ij} (y) \leq F(y)$, where $F \in L^2
(\mathbb{R}^3) \cap L^1 (\mathbb{R}^3) $ and $\lambda$ takes values as defined
in R1.
\item
There exists $\epsilon > 0$ such that $H_0 - (\lambda + \epsilon )V_{ij} \geq 0$
for all $\lambda$ defined in R1 and all pair potentials $V_{ij}$.
\end{list}
Requirement R7 means that the two--particle subsystems have no bound states with
negative energy and no resonances at zero energy. This results in
$E_{thr}(\lambda) = 0$. Our aim is to prove
\begin{theorem}\label{th:2}
Suppose $H(\lambda)$ defined in (\ref{xc31aa})--(\ref{:xc31aa}) satisfies R1,
R4-7. Then for $n \to \infty$ the sequence $\psi(\lambda_n)$
does not spread and there exists a bound state at threshold $\psi_{cr} \in
D(H_0)$, $\|\psi_{cr}\| \neq 0$, such that $H(\lambda_{cr})\psi_{cr} = 0$.
\end{theorem}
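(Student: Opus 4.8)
The plan is to reduce everything to the single statement that $\psi(\lambda_n)$ does not spread, and then to prove non-spreading through the Faddeev structure together with R7. First note that R7 forces $E_{thr}(\lambda_n)=0$, so by R5 $E(\lambda_n)\to0$, and by Lemma~\ref{lem:1} the sequence is bounded in the graph norm of $H_0$. Since $\norm{\psi(\lambda_n)}=1$, a sequence that does not spread cannot totally spread: if $\limsup_n\norm{\chi_{\{|x|>R\}}\psi(\lambda_n)}\le 1/2$ for some $R$, then $\liminf_n\norm{\chi_{\{|x|\le R\}}\psi(\lambda_n)}>0$. Hence, once non-spreading is established, Theorem~\ref{th:1} (whose hypotheses R1--R5 hold here, R3 being automatic because $V$ is fixed and $\lambda_n\to\lambda_{cr}$) produces the threshold state $\psi_{cr}\neq0$ with $H(\lambda_{cr})\psi_{cr}=0$. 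By Lemma~\ref{lem:4} it therefore suffices to show that every weakly convergent subsequence of $\psi(\lambda_n)$ converges in norm; fix one such subsequence, relabel it $\psi_n\wto\psi_{cr}$, and recall from Lemma~\ref{lem:2} that $\psi_{cr}\in D(H_0)$ and that every relatively $H_0$--compact operator converges in norm along $\psi_n$.

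Next I set up a Birman--Schwinger form of the Faddeev equations. Writing $\alpha$ for a pair and using $E_n<0$, the Schr\"odinger equation gives $\psi_n=(H_0-E_n)^{-1}\sum_\alpha\lambda_n V_\alpha\psi_n$. Put $u_\alpha^{(n)}=(\lambda_n V_\alpha)^{1/2}\psi_n$, bounded in $L^2$ by the estimate in the proof of Lemma~\ref{lem:1}. These satisfy $u_\alpha^{(n)}=\sum_\beta K_{\alpha\beta}(E_n)u_\beta^{(n)}$ with $K_{\alpha\beta}(E)=(\lambda_n V_\alpha)^{1/2}(H_0-E)^{-1}(\lambda_n V_\beta)^{1/2}$. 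Two facts drive the argument. For the diagonal part, R7 reads $(\lambda_n+\epsilon)V_\alpha\le H_0$, i.e. $(\lambda_n+\epsilon)V_\alpha^{1/2}H_0^{-1}V_\alpha^{1/2}\le1$, and since $(H_0-E_n)^{-1}\le H_0^{-1}$ for $E_n\le0$ this yields the uniform contraction
\begin{equation}
\norm{K_{\alpha\alpha}(E_n)}\le\frac{\lambda_n}{\lambda_n+\epsilon}\le\theta<1 ,
\end{equation}
with $\theta$ independent of $n$ because $\lambda_n\to\lambda_{cr}$. For the off-diagonal part $\alpha\neq\beta$, the factors $V_\alpha^{1/2}$ and $V_\beta^{1/2}$ live on distinct collision planes; splitting each factor with a conical cutoff and applying Lemma~\ref{lem:5} shows that $K_{\alpha\beta}(E_n)$ is compact and converges in norm to $K_{\alpha\beta}(0)$ as $E_n\to0$.

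Now write $K=D+O$ with $D$ the diagonal and $O$ the off-diagonal block. By the contraction bound $1-D(E_n)$ is invertible with $\norm{(1-D(E_n))^{-1}}\le(1-\theta)^{-1}$, so $u^{(n)}=(1-D(E_n))^{-1}O(E_n)u^{(n)}$. Passing to a weak limit $u^{(n)}\wto u^{(0)}$, compactness and norm convergence of $O(E_n)$ give $O(E_n)u^{(n)}\to O(0)u^{(0)}$ in norm, while monotone convergence of $(H_0-E_n)^{-1}$ as $E_n\uparrow0$ makes $D(E_n)\to D(0)$ strongly under the uniform bound $\theta$, so $(1-D(E_n))^{-1}$ converges strongly and boundedly; hence $u^{(n)}\to u^{(0)}$ in norm. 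Testing against $C_0^\infty$ identifies $u_\alpha^{(0)}=(\lambda_{cr}V_\alpha)^{1/2}\psi_{cr}$, and the weak-limit computation of the proof of Theorem~\ref{th:1} already gives $H(\lambda_{cr})\psi_{cr}=0$, i.e. $H_0\psi_{cr}=\lambda_{cr}V\psi_{cr}$. As a first consequence, $(\psi_n,H_0\psi_n)=E_n+\sum_\alpha\norm{u_\alpha^{(n)}}^2\to\sum_\alpha\norm{u_\alpha^{(0)}}^2=(\psi_{cr},\lambda_{cr}V\psi_{cr})=(\psi_{cr},H_0\psi_{cr})$, and together with $\psi_n\wto\psi_{cr}$ this forces $H_0^{1/2}\psi_n\to H_0^{1/2}\psi_{cr}$ in norm.

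The remaining --- and genuinely three--body --- step is to upgrade this to $\psi_n\to\psi_{cr}$ in norm, equivalently to exclude a mass $\mu=1-\norm{\psi_{cr}}^2$ escaping to spatial infinity while carrying vanishing kinetic energy; the energy identities above are blind to such low--momentum escape, and this is exactly where the absence of pair zero--energy resonances must be used once more. I expect this to be the main obstacle. The route I would take is to set $\eta_n=\psi_n-\psi_{cr}$, note that $(H_0-\lambda_n V)\eta_n=r_n$ with $r_n=E_n\psi_n-(\lambda_{cr}-\lambda_n)V\psi_{cr}\to0$ in norm, and use the convergence of $u^{(n)}$ (with Lemma~\ref{lem:5}) to show that the interaction term $\lambda_n V\eta_n$ vanishes in norm, so that $g_n:=\lambda_n V\psi_n\to g_0=H_0\psi_{cr}$ and any escaping mass lives in the break--up region where $\psi_n=(H_0-E_n)^{-1}g_n$ with $g_n$ localised near the collision planes. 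Writing $\psi_n-\psi_{cr}=(H_0-E_n)^{-1}(g_n-g_0)+[(H_0-E_n)^{-1}-H_0^{-1}]g_0$, the second term tends to zero because $0$ is not an eigenvalue of $H_0$ (dominated convergence in the spectral representation of $H_0$), and the delicate term is the first: since $\norm{(H_0-E_n)^{-1}}=|E_n|^{-1}\to\infty$, one must show that $g_n-g_0$ has asymptotically no low--energy content, and it is precisely R7 (no pair resonance at threshold), together with $V\ge0$, that prevents $(H_0-E_n)^{-1}$ from developing a slowly decaying two--cluster tail as $E_n\to0$. Once $\mu=0$ is secured, $\psi_n\to\psi_{cr}$ in norm, Lemma~\ref{lem:4} gives non--spreading of the full sequence, and Theorem~\ref{th:1} completes the proof.
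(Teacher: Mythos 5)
Your reduction to the statement ``every weakly convergent subsequence of $\psi(\lambda_n)$ converges in norm'' (via Lemma~\ref{lem:4} and Theorem~\ref{th:1}) is exactly the paper's opening move, and your treatment of the Birman--Schwinger system for $u_\alpha^{(n)}=(\lambda_n V_\alpha)^{1/2}\psi_n$ is sound: the diagonal contraction $\norm{K_{\alpha\alpha}(E_n)}\le\lambda_n/(\lambda_n+\epsilon)<1$ is the paper's Lemma~\ref{lem:7,5}(b), and your route to norm convergence of $u^{(n)}$ --- compactness plus norm continuity of the off--diagonal blocks acting on a weakly convergent sequence --- is a legitimate alternative to the paper's Lemma~\ref{lem:9}, which reaches the same conclusion instead through the IMS localization formula, the Ruelle--Simon partition and R7 (your appeal to ``conical cutoffs and Lemma~\ref{lem:5}'' for the compactness of $V_\alpha^{1/2}(H_0-E)^{-1}V_\beta^{1/2}$, $\alpha\neq\beta$, would need to be written out, but the claim is true). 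The subsequent observation that $H_0^{1/2}\psi_n\to H_0^{1/2}\psi_{cr}$ in norm is also correct.

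The genuine gap is the final step, which you yourself flag as ``the main obstacle'' and then only sketch: passing from norm convergence of $V_{ij}^{1/2}\psi_n$ and of $H_0^{1/2}\psi_n$ to norm convergence of $\psi_n$. These two facts do not exclude an $L^2$ mass defect: a piece of $\psi_n$ can drift to infinity along a two--cluster channel carrying vanishing kinetic energy and vanishing interaction energy (think of $n^{-3/2}\phi(\xi/n)$), so the energy identity is, as you say, blind to it. Feeding $V_{ij}^{1/2}\psi_n\to V_{ij}^{1/2}\psi_{cr}$ back into $\psi_n=\lambda_n\sum(H_0+k_n^2)^{-1}V_{ij}^{1/2}\,(V_{ij}^{1/2}\psi_n)$ does not close the argument because $(H_0+k_n^2)^{-1}V_{ij}^{1/2}$ is not uniformly bounded as $k_n\to+0$; and your proposed decomposition $\psi_n-\psi_{cr}=(H_0-E_n)^{-1}(g_n-g_0)+[(H_0-E_n)^{-1}-H_0^{-1}]g_0$ founders on both terms, since $\norm{(H_0-E_n)^{-1}}=|E_n|^{-1}\to\infty$ and $H_0^{-1}g_0$ need not even be in $L^2$ a priori. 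The paper closes precisely this hole with the operators $B_{ij}(z)$ of (\ref{ay6}): one factorizes $(H_0+k_n^2)^{-1}V_{ij}^{1/2}=\mathcal{A}_{ij}(k_n)B_{ij}^{-1}(k_n)$ with $\mathcal{A}_{ij}$ uniformly bounded and strongly continuous (Lemma~\ref{lem:6}), which shifts the burden to proving norm convergence of the \emph{stronger} quantity $B_{ij}^{-1}(k_n)V_{ij}^{1/2}\psi_n$; this in turn is extracted from the Faddeev rearrangement (\ref{nnr5}), whose off--diagonal kernels $B_{ij}^{-1}(z)\mathcal{C}_{ij;lm}(z)$, $(ij)\neq(lm)$, remain uniformly bounded because the momentum transfer between different collision planes supplies the decay in $p_y$ that absorbs the $(z+\sqrt{|p_y|})^{-1}$ singularity (Lemma~\ref{lem:8}). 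Some such cancellation mechanism --- not merely R7 and positivity, which you invoke only rhetorically at this point --- is indispensable, and without it your proof is incomplete.
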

We shall defer the proof, which boils down to the construction of Faddeev
equations \cite{faddeev}, see also \cite{sobol,yafaev}, to the end of the
section.
Let us introduce an analytic operator function $B_{ij} (z)$ for each pair of
particles $(ij)$. We shall construct $B_{12}$ and the other two operators are
constructed similarly. We use Jacobi coordinates \cite{greiner} $x = [\sqrt{2
\mu_{12}}/\hbar](r_2 - r_1)$ and
$y = [\sqrt{2 M_{12}}/\hbar](r_3 - m_1/(m_1+m_2) r_1 - m_2/(m_1+m_2) r_2)$,
where $\mu_{ij} = m_i m_j /(m_i + m_j)$ and $M_{ij} = (m_i + m_j)m_l / (m_i +
m_j + m_l)$
are reduced masses (the indices $i,j,l$ are all different). These coordinates
make the kinetic energy operator take the form
\begin{equation}\label{ay4}
    H_0 = - \Delta_x - \Delta_y .
\end{equation}
Let $\mathcal{F}_{12}$ denote the partial Fourier transform in
$L^2(\mathbb{R}^6)$ acting as follows
\begin{equation}\label{ay5}
\hat f(x,p_y)  =  \mathcal{F}_{12} f(x,y) = \frac 1{(2 \pi )^{3/2}} \int d^3 y
\; e^{-ip_y \cdot \; y} f(x,y) .
\end{equation}
Then $B_{12}(z)$ is defined through
\begin{equation}\label{ay6}
B_{12}(z) = 1 + z  + \mathcal{F}^{-1}_{12} t(p_y) \mathcal{F}_{12},
\end{equation}
where
\begin{equation}\label{ay669}
t(p_y) = (\sqrt{|p_y|} - 1)\chi_{\{p_y | \; |p_y| \leq 1\}}.
\end{equation}
Similarly, using other Jacobi coordinates one defines $B_{ij}(z)$  and
$\mathcal{F}_{ij}(z)$ for all particle pairs.
Note that $B_{ij}(z)$ and $B^{-1}_{ij}(z)$ are analytic on $\real z > 0$.

\begin{lemma}\label{lem:6}
The operator function in $L^2 (\mathbb{R}^6)$
\begin{equation}\label{ya23}
  \mathcal{A}_{ij} (z) = (H_0 + z^2)^{-1} V_{ij}^{1/2} B_{ij} (z)
\end{equation}
is uniformly bounded for $z \in (0,1]$, and strongly continuous for $z \to +0$.
\end{lemma}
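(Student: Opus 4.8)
The plan is to diagonalize the operator in the $y$--variable by the partial Fourier transform $\mathcal{F}_{12}$ and thereby reduce the statement to a uniform fiberwise estimate on $L^2(\mathbb{R}^3_x)$. Fixing the pair $(12)$ and using the Jacobi coordinates of the excerpt, $H_0 = -\Delta_x - \Delta_y$ and $V_{12}$ is a function of $x$ alone, so $V_{12}^{1/2}$ commutes with $\mathcal{F}_{12}$. Since $\mathcal{F}_{12}$ is unitary and sends $-\Delta_y$ to multiplication by $\abs{p_y}^2$, conjugating $\mathcal{A}_{12}(z)$ by $\mathcal{F}_{12}$ produces a direct integral over $p_y$ whose fiber on $L^2(\mathbb{R}^3_x)$ is
\[
\tilde{\mathcal{A}}(z)_{p_y} = \bigl(-\Delta_x + \abs{p_y}^2 + z^2\bigr)^{-1} V_{12}^{1/2}(x)\,\bigl(1 + z + t(p_y)\bigr).
\]
Because $\mathcal{F}_{12}$ is unitary, $\norm{\mathcal{A}_{12}(z)}$ equals the essential supremum over $p_y$ of $\norm{\tilde{\mathcal{A}}(z)_{p_y}}$, so everything reduces to controlling this fiber norm.

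For the fiber bound I would estimate the Hilbert--Schmidt norm of the resolvent part. Writing $w = (\abs{p_y}^2 + z^2)^{1/2}$ and using the three--dimensional Yukawa kernel $e^{-w\abs{x-x'}}/(4\pi\abs{x-x'})$ of $(-\Delta_x + w^2)^{-1}$, a direct computation gives
\[
\norm{(-\Delta_x + w^2)^{-1} V_{12}^{1/2}}_{HS}^2 = \frac{1}{8\pi w}\,\norm{V_{12}}_{L^1},
\]
which is finite by R6. Hence $\norm{\tilde{\mathcal{A}}(z)_{p_y}} \leq C\,w^{-1/2}\,\abs{1 + z + t(p_y)}$ with $C = (\norm{V_{12}}_{L^1}/8\pi)^{1/2}$. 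I then split on $\abs{p_y}$. For $\abs{p_y} > 1$ one has $t(p_y)=0$ and $w>1$, so the bound is at most $2C$. For $\abs{p_y}\leq 1$ the scalar factor collapses to $z + \sqrt{\abs{p_y}}$, and the elementary inequalities $z \leq \sqrt{z} \leq \sqrt{\abs{p_y}+z}$, $\sqrt{\abs{p_y}} \leq \sqrt{\abs{p_y}+z}$, together with $\abs{p_y}^2 + z^2 \geq \frac{1}{2}(\abs{p_y}+z)^2$, give $w^{-1/2}\bigl(z+\sqrt{\abs{p_y}}\bigr) \leq 2^{5/4}$. Thus $\sup_{z\in(0,1],\,p_y}\norm{\tilde{\mathcal{A}}(z)_{p_y}} \leq 2^{5/4}C =: C'$ is finite, proving uniform boundedness; the same estimate defines the bounded fiber at $z=0$ (the isolated point $p_y=0$ being negligible) and shows the uniform bound $C'$ holds for all $z\in[0,1]$.

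For strong continuity I would fix $f\in L^2(\mathbb{R}^6)$ and use Parseval to write $\norm{\mathcal{A}_{12}(z)f - \mathcal{A}_{12}(0)f}^2 = \int dp_y\,\norm{[\tilde{\mathcal{A}}(z)_{p_y} - \tilde{\mathcal{A}}(0)_{p_y}]\hat f(\cdot,p_y)}^2$. For each fixed $p_y\neq 0$ the fiber difference tends to zero in operator norm as $z\to +0$: splitting it as $[R(z)-R(0)]V_{12}^{1/2}(1+t(p_y)) + z\,R(z)V_{12}^{1/2}$, the second resolvent identity gives $\norm{[R(z)-R(0)]V_{12}^{1/2}} \leq z^2 C\abs{p_y}^{-5/2}$, while the $z$--linear term is bounded by $z\,C\abs{p_y}^{-1/2}$, both vanishing. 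Hence the integrand converges to zero for a.e. $p_y$, and it is dominated by $(2C')^2\norm{\hat f(\cdot,p_y)}^2 \in L^1(dp_y)$ by the uniform fiber bound. Dominated convergence then yields $\norm{\mathcal{A}_{12}(z)f - \mathcal{A}_{12}(0)f}\to 0$.

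The main obstacle is the joint degeneracy in the region of small $\abs{p_y}$ and small $z$, where the resolvent Hilbert--Schmidt norm diverges like $w^{-1/2}$ while the scalar factor $z + \sqrt{\abs{p_y}}$ simultaneously vanishes; the precise role of the multiplier $t(p_y)$ in $B_{12}(z)$ is to supply exactly this compensating $\sqrt{\abs{p_y}}$ at threshold. The whole argument therefore hinges not on any single deep inequality but on verifying that these two threshold effects cancel, and on having the uniform fiber bound strong enough to drive the dominated--convergence argument for the limit $z\to+0$.
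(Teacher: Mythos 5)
Your proof is correct and follows essentially the same route as the paper: partial Fourier transform in $y$, reduction to a fiberwise estimate on $L^2(\mathbb{R}^3_x)$, the Hilbert--Schmidt (Yukawa-kernel) bound $\norm{(-\Delta_x+w^2)^{-1}V_{12}^{1/2}}\leq C w^{-1/2}$, and the cancellation of this singularity against the factor $1+z+t(p_y)$. The only cosmetic differences are that the paper splits $B_{12}(z)$ into the two pieces $1+t(p_y)$ and $z$ (estimating each separately via Cauchy--Schwarz on the kernel, which is the same computation) and disposes of the strong continuity with a one-line appeal to dominated convergence, whereas you keep the scalar factor together and spell out the fiberwise limit in detail.
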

\begin{proof}
We consider the case when $(ij) = (12)$, other indices are treated similarly.
Instead of $\mathcal{A}_{12} (z) $ we consider  $\mathcal{F}_{12}
\mathcal{A}_{12} (z) \mathcal{F}^{-1}_{12}   $. We take $z \in (0,1)$ and split
the operator
\begin{equation}\label{redew1}
    \mathcal{F}_{12}  \mathcal{A}_{12} (z) \mathcal{F}^{-1}_{12}   = K_1 (z) +
K_2 (z) ,
\end{equation}
where
\begin{eqnarray}
K_1 (z) = (- \Delta_x + p_y^2 + z^2)^{-1} V^{1/2}_{12}(\alpha x) [t(p_y) + 1]  ,
\label{redew2} \\
K_2 (z) = (- \Delta_x + p_y^2 + z^2)^{-1} V^{1/2}_{12}(\alpha x) z
\label{redew3}
\end{eqnarray}
are integral operators acting on $\phi (x, p_y)\in L^2(\mathbb{R}^6)$ as
\begin{eqnarray}
K_1 (z)\phi  = \frac 1{4 \pi} \int d^3 x'  \frac{e^{-\sqrt{p_y^2 + z^2}
|x-x'|}}{|x-x'|}  V^{1/2}_{12}(\alpha x') [t(p_y) + 1] \phi (x', p_y) ,
\label{redew6} \\
K_2 (z)\phi  = \frac{z }{4 \pi} \int d^3 x'  \frac{e^{-\sqrt{p_y^2 + z^2}
|x-x'|}}{|x-x'|}  V^{1/2}_{12}(\alpha x') \phi (x', p_y) . \label{redew7}
\end{eqnarray}
The numerical coefficient $\alpha$ depends on masses $\alpha := \hbar
/\sqrt{2\mu_{12}} $. Applying the Cauchy-Schwarz inequality we get
\begin{eqnarray}
\fl \bigl| K_1 (z) \phi \bigr|^2 \leq \int d^3 x'  \frac{e^{-2 |p_y|
|x-x'|}}{|x-x'|^2} [t(p_y) +1]^2 V_{12} (\alpha x')  \times \int d^3 x' \bigl|
\phi (x', p_y) \bigr|^2  ,  \label{redew4} \\
\fl \bigl| K_2 (z)\phi \bigr|^2 \leq z^2  \int d^3 x'  \frac{e^{-2 z
|x-x'|}}{|x-x'|^2} V_{12} (\alpha x')  \times \int d^3 x' \bigl| \phi (x', p_y)
\bigr|^2 \label{redew5} ,
\end{eqnarray}
where we have used $ z \in (0,1]$. Integrating (\ref{redew4}) and (\ref{redew5})
over $x$ leads to
\begin{eqnarray}
\int d^3 x \bigl| K_1 (z) \phi \bigr|^2 \leq cc'c''\Bigl[\int d^3 x' \bigl| \phi
(x', p_y) \bigr|^2 \Bigr] , \label{wadi1}\\
\int d^3 x \bigl| K_2 (z) \phi \bigr|^2 \leq cc'\Bigl[\int d^3 x' \bigl| \phi
(x', p_y) \bigr|^2 \Bigr] , \label{wadi2}
\end{eqnarray}
where $c, c', c''$ are the following finite constants
\begin{eqnarray}
    c = \int d^3 x'  V_{12} (\alpha x') , \label{aijacts9} \\
    c' = \int d^3 x \frac{e^{-2 |x|}}{|x|^2}  , \label{aijacts10} \\
    c'' = \sup_{p_y \in \mathbb{R}^3} [t(p_y) + 1 ]^2 / |p_y| . \label{aijacts4}
\end{eqnarray}
Integrating (\ref{wadi1})--(\ref{wadi2}) over $p_y$ gives that $K_{1,2} (z)$ is
uniformly norm--bounded for $z \in (0,1]$. The strong continuity for $z \to +0$
follows from (\ref{redew6})--(\ref{redew7}) by the dominated convergence
theorem. \end{proof}

For $z \in (0,\infty)$ let us define
\begin{equation}\label{ay2}
    \mathcal{C}_{ik;jm}(z) = V^{1/2}_{ik} (H_0 + z^2)^{-1} V^{1/2}_{jm} .
\end{equation}
The properties of the above operator are summarized in the following 
\begin{lemma}\label{lem:7,5}
Suppose $H(\lambda)$ defined in (\ref{xc31aa})--(\ref{:xc31aa}) satisfies R1,
R6, R7. Then (a) the operator function
$\mathcal{C}_{ik;jm}(z)$ is norm--continuous for $z >0$ and has a norm limit for
$z \to +0$; (b) there exists $\delta >0$ such that
$\lambda_n \| C_{ik;ik} (z) \| < 1-\delta$ for all $z \geq 0$.
\end{lemma}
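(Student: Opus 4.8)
The plan is to treat the two assertions separately, deriving (b) from a Birman--Schwinger (quadratic form) argument based on R7 and (a) from a Hilbert--Schmidt estimate after a partial Fourier transform, using R6.

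For part (b) I would first record the elementary equivalence, valid for $\mu\geq 0$, $V_{ik}\geq 0$ and $z\geq 0$,
\[
H_0+z^2-\mu V_{ik}\geq 0 \quad\Longleftrightarrow\quad \mu\,\|\mathcal{C}_{ik;ik}(z)\|\leq 1 ,
\]
obtained by writing the form inequality $(\psi,(H_0+z^2)\psi)\geq\mu\|V_{ik}^{1/2}\psi\|^2$, substituting $\psi=(H_0+z^2)^{-1/2}\phi$, and using $\|V_{ik}^{1/2}(H_0+z^2)^{-1/2}\|^2=\|\mathcal{C}_{ik;ik}(z)\|$ (since $\|AA^*\|=\|A^*A\|$ with $A=V_{ik}^{1/2}(H_0+z^2)^{-1/2}$). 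By R7 we have $H_0-(\lambda_n+\epsilon)V_{ik}\geq 0$, and adding $z^2\geq 0$ gives $H_0+z^2-(\lambda_n+\epsilon)V_{ik}\geq 0$ for every $z\geq 0$; hence $(\lambda_n+\epsilon)\|\mathcal{C}_{ik;ik}(z)\|\leq 1$, that is $\lambda_n\|\mathcal{C}_{ik;ik}(z)\|\leq 1-\epsilon/(\lambda_n+\epsilon)$. Since $\lambda_n\to\lambda_{cr}<\infty$ by R1 the sequence is bounded, $\sup_n\lambda_n=:M<\infty$, so choosing $\delta:=\epsilon/(2(M+\epsilon))>0$ yields $\lambda_n\|\mathcal{C}_{ik;ik}(z)\|\leq 1-2\delta<1-\delta$ uniformly in $n$ and in $z\geq 0$, which is (b).

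For part (a) I would factor $\mathcal{C}_{ik;jm}(z)=A_{ik}(z)\,A_{jm}(z)^*$ with $A_{ik}(z):=V_{ik}^{1/2}(H_0+z^2)^{-1/2}$, so that it suffices to prove that each $A_{ik}(z)$ is norm continuous on $[0,\infty)$ and uniformly bounded; the latter follows from $\|A_{ik}(z)\|^2=\|\mathcal{C}_{ik;ik}(z)\|$, controlled in (b), and norm continuity of the product together with the existence of the limit at $z\to+0$ then follow from $\|\mathcal{C}_{ik;jm}(z)-\mathcal{C}_{ik;jm}(z')\|\leq\|A_{ik}(z)-A_{ik}(z')\|\,\|A_{jm}(z)\|+\|A_{ik}(z')\|\,\|A_{jm}(z)-A_{jm}(z')\|$. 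To estimate $A_{ik}(z)-A_{ik}(z')$ I would conjugate by the partial Fourier transform $\mathcal{F}_{ik}$ in the spectator variable, exactly as in Lemma~\ref{lem:6}, reducing $A_{ik}$ to a direct integral over $p_y$ of the three--dimensional operators $V_{ik}^{1/2}(\alpha x)\,(-\Delta_x+m^2)^{-1/2}$ with $m=\sqrt{p_y^2+z^2}$. Bounding the operator norm of each fibre difference by its Hilbert--Schmidt norm gives
\[
\|A_{ik}(z)-A_{ik}(z')\|\leq (\alpha^{-3}\|V_{ik}\|_1)^{1/2}\,\sup_{p_y}\|K_m-K_{m'}\|_{L^2(\mathbb{R}^3)} ,
\]
where $K_m$ is the kernel of $(-\Delta+m^2)^{-1/2}$ and $m'=\sqrt{p_y^2+z'^2}$. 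Finally, writing $K_{m'}-K_m=\int_{m'}^{m}s\,(-\Delta+s^2)^{-3/2}\,ds$ as Fourier multipliers and applying Minkowski's integral inequality together with $\|s(p^2+s^2)^{-3/2}\|_{L^2_p}=c\,s^{-1/2}$ and $|\sqrt m-\sqrt{m'}|\leq|m-m'|^{1/2}\leq|z-z'|^{1/2}$ gives $\|K_m-K_{m'}\|_{L^2}\leq 2c\,|z-z'|^{1/2}$ uniformly in $p_y$; hence $A_{ik}$ is H\"older continuous of order $1/2$ on $[0,\infty)$, which finishes (a).

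The main obstacle is the limit $z\to+0$ in (a). The naive route, showing that $\mathcal{C}_{ik;jm}(z)$ is itself Hilbert--Schmidt and converging by dominated convergence, fails: since $V_{ik}$ and $V_{jm}$ do not decay in all six variables, the kernel $V_{ik}^{1/2}(X)\,G_z(X-X')\,V_{jm}^{1/2}(X')$ (with $G_z$ the six--dimensional resolvent kernel) is not square integrable. The device that rescues the argument is to split off the square roots of the resolvent and to exploit that, although $(-\Delta+m^2)^{-1/2}$ has no $L^2$ kernel as $m\to 0$, the \emph{difference} $K_m-K_{m'}$ does, with an $L^2$ norm that is $\mathcal{O}(|z-z'|^{1/2})$. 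Verifying the uniformity in the spectator momentum, and in particular that the bound does not deteriorate as $p_y\to 0$ (where the effective mass $m$ attains its smallest value $z$), is the delicate point; the estimate $|\sqrt m-\sqrt{m'}|\leq|z-z'|^{1/2}$ is exactly what secures it.
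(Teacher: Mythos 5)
Your part (b) is the paper's own argument almost verbatim: from R7 one adds $z^2\geq 0$ to $H_0-(\lambda_n+\epsilon)V_{ik}\geq 0$, tests with $(H_0+z^2)^{-1/2}\eta$, and uses $\|A^\dagger A\|=\|AA^\dagger\|$ to get $\|\mathcal{C}_{ik;ik}(z)\|\leq 1/(\lambda_n+\epsilon)$; the paper then sets $\delta=\epsilon/(\overline{\lambda}+\epsilon)$ with $\overline{\lambda}=\sup_n\lambda_n$, which is your choice up to the harmless factor of $2$.

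Part (a) is correct but takes a genuinely different route. The paper never works with the square root of the resolvent: it first bounds the \emph{diagonal} difference $\|\mathcal{C}_{ik;ik}(z_1)-\mathcal{C}_{ik;ik}(z_2)\|$ by the fibre Hilbert--Schmidt norm of the completely explicit Yukawa kernel, using $|e^{-ar}-e^{-br}|/r\leq|a-b|$ and $\sqrt{p_y^2+z_2^2}-\sqrt{p_y^2+z_1^2}\leq\sqrt{z_2^2-z_1^2}$ (this is where $V_{ik}\in L^1$ from R6 enters), and then reduces the \emph{off-diagonal} case to the diagonal one by noting that $(H_0+z_1^2)^{-1}-(H_0+z_2^2)^{-1}\geq 0$, inserting its operator square root, and applying $\|AB\|\leq\|AA^\dagger\|^{1/2}\|BB^\dagger\|^{1/2}$, so that the off-diagonal difference is controlled by the geometric mean of two diagonal differences. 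You instead factor $\mathcal{C}_{ik;jm}(z)=A_{ik}(z)A_{jm}(z)^{\dagger}$ with $A_{ik}(z)=V_{ik}^{1/2}(H_0+z^2)^{-1/2}$ and prove H\"older-$1/2$ continuity of each factor separately, paying for the non-explicit kernel of $(-\Delta+m^2)^{-1/2}$ with the Fourier-multiplier computation $\int_{m'}^{m}\|s(p^2+s^2)^{-3/2}\|_{L^2_p}\,ds=2c(\sqrt{m}-\sqrt{m'})\leq 2c\,|z-z'|^{1/2}$, uniform in the spectator momentum because $|m-m'|\leq|z-z'|$. Both arguments use only $V_{ik}\in L^1$ and both give a modulus of continuity of order $|z-z'|^{1/2}$ on bounded intervals, hence the Cauchy property at $z=+0$; yours additionally delivers norm continuity of the individual operators $V^{1/2}(H_0+z^2)^{-1/2}$ (a reusable byproduct) and avoids the positivity/operator-square-root input, while the paper's stays entirely within explicit position-space kernels. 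One minor imprecision: $(-\Delta+m^2)^{-1/2}$ has no $L^2$ convolution kernel for \emph{any} $m>0$ (the multiplier $(p^2+m^2)^{-1/2}$ is not square integrable at large $p$), not only in the limit $m\to 0$; your difference device is needed for that reason as well, and your Minkowski estimate does cover it since each integrand $s(p^2+s^2)^{-3/2}$ is square integrable and $s^{-1/2}$ is integrable at $s=0$. This does not affect the validity of the proof.
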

\begin{proof}
Below we prove that $\mathcal{C}_{ik;jm}(z)$ for $z_{1,2} \in (0,\infty)$
satisfies the following continuity condition
\begin{equation}\label{sukref7}
 \| \mathcal{C}_{ik;jm}(z_1) - \mathcal{C}_{ik;jm}(z_2) \| \leq
l\sqrt{\bigl|z^2_1 - z^2_2 \bigr|},
\end{equation}
where $l$ is a constant. From (\ref{sukref7}) it easily follows that
$\mathcal{C}_{ik;jm}(z)$ for $z \to +0$ form a Cauchy sequence. Therefore we can
define the norm limit
\begin{equation}
 \mathcal{C}_{ik;jm}(0) := \lim_{z \to +0}   \mathcal{C}_{ik;jm}(z)
\end{equation}
and $\mathcal{C}_{ik;jm}(z)$ becomes norm--continuous for $z \geq 0$. Let us
first
prove (\ref{sukref7}) for $\mathcal{C}_{ik;ik}(z)$. It suffices to consider
$\mathcal{C}_{12;12}(z)$. Taking $0 < z_1 < z_2$ we have
\begin{equation}\label{sukref8}
 \bigl\| \mathcal{C}_{12;12}(z_1) - \mathcal{C}_{12;12}(z_2) \bigr\| = \| K(z_1,
z_2)\|,
\end{equation}
 where
\begin{equation}
 K(z_1, z_2) := \mathcal{F}_{12} V^{1/2}_{12} \Bigl[ (H_0 + z^2_1)^{-1} - (H_0 +
z^2_2)^{-1}\Bigr] V^{1/2}_{12} \mathcal{F}_{12}^{-1} .
\end{equation}
The integral operator $ K(z_1, z_2)$ acts on $\phi(x, p_y) \in L^2
(\mathbb{R}^6)$ as
\begin{equation}
 K(z_1, z_2) \phi (x, p_y) = \int d^3 x' \: k(z_1, z_2; x, x', p_y) \phi(x',
p_y),
\end{equation}
where
\begin{equation}
\fl k(z_1, z_2; x, x', p_y) = \frac{\sqrt{V_{12}(\alpha x) V_{12}(\alpha x')}}{4\pi
|x-x'|} \Bigl\{ e^{-\sqrt{p_y^2 + z_1^2} |x-x'|} - e^{-\sqrt{p_y^2 + z_2^2}
|x-x'|} \Bigr\}.
\end{equation}
Obviously,
\begin{equation}\label{sukref2}
 \| K(z_1, z_2) \|^2 \leq \sup_{p_y \in \mathbb{R}^3} \int d^3 x \: d^3 x' \:
\bigl|  k(z_1, z_2; x, x', p_y)  \bigr|^2 .
\end{equation}
Using the inequality
\begin{equation}
|k(z_1, z_2; x, x', p_y)| \leq  \frac{\sqrt{V_{12}(\alpha x) V_{12}(\alpha
x')}}{4\pi } \sqrt{z_2^2 - z_1^2}
\end{equation}
we obtain from (\ref{sukref2})
\begin{equation}\label{sukref4}
 \| K(z_1, z_2) \|^2 \leq \frac{c^2}{16\pi^2}\bigl|z_2^2 - z_1^2\bigr| ,
\end{equation}
where $c$ was defined in (\ref{aijacts9}). From (\ref{sukref4}) and
(\ref{sukref8}) the continuity condition (\ref{sukref7}) follows for
$\mathcal{C}_{ik;ik}(z)$.
It remains to prove (\ref{sukref7}) for $\mathcal{C}_{ik;jm}(z)$.
For $0 < z_1 < z_2$ by the resolvent identity we have
\begin{equation}\label{sukref1}
 \fl (H_0 + z^2_1)^{-1} - (H_0 + z^2_2)^{-1} = (z_2^2 - z_1^2)(H_0+z_1^2)^{-1/2}
(H_0+z_2^2)^{-1} (H_0+z_1^2)^{-1/2} \geq 0
\end{equation}
Thus we can write
\begin{eqnarray}
\bigl\| \mathcal{C}_{ik;jm}(z_1) - \mathcal{C}_{ik;jm}(z_2) \bigr\| \nonumber \\
\fl  = \Bigl\| V^{1/2}_{ik} \bigl[(H_0 + z^2_1)^{-1} - (H_0 + z^2_2)^{-1}\bigr]^{1/2}
\bigl[(H_0 + z^2_1)^{-1} - (H_0 + z^2_2)^{-1}\bigr]^{1/2} V^{1/2}_{jm}
\Bigr\| \nonumber\\
\fl \leq \bigl\| \mathcal{C}_{ik;ik}(z_1) - \mathcal{C}_{ik;ik}(z_2) \bigr\|^{1/2}
\bigl\| \mathcal{C}_{jm;jm}(z_1) - \mathcal{C}_{jm;jm}(z_2) \bigr\|^{1/2} \leq
l\sqrt{\bigl|z^2_1 - z^2_2 \bigr|} ,
\end{eqnarray}
where we have used that $\|AB \| \leq \|AA^\dagger \|^{1/2} \|BB^\dagger
\|^{1/2}$ for any bounded $A,B$.

Let us prove (b). The statement follows from the Birman--Schwinger principle,
see \cite{klaus1}. For completeness we sketch the proof here. By R7 for all $z
>0$ we have
\begin{equation}
 H_0 + z^2 \geq (\epsilon + \lambda_n)V_{ik}
\end{equation}
Forming a scalar product with $(H_0 + z^2)^{-1/2}\eta$, where $\eta \in D(H_0)$,
$\|\eta\| = 1$ gives
\begin{equation}
(\epsilon + \lambda_n)\bigl(\eta, (H_0+z^2)^{-1/2}V_{ik} (H_0+z^2)^{-1/2} \eta
\bigr) \leq 1
\end{equation}
This means that
\begin{equation}
\bigl\| \mathcal{C}_{ik;ik}(z)\bigr\|  =  \bigl\|(H_0+z^2)^{-1/2}V_{ik}
(H_0+z^2)^{-1/2} \bigr\|\leq 1/(\epsilon + \lambda_n),
\end{equation}
where we have used that $\|AA^\dagger\| = \|A^\dagger A\| $ for any bounded $A$.
Thus (b) follows if we set $\delta = \epsilon/(\overline{\lambda} + \epsilon)$,
where $\overline{\lambda} := \sup_n \lambda_n$.
\end{proof}

We shall need the following
\begin{lemma}\label{lem:7}
Suppose R1, R4-7 are satisfied. Then the operators
\begin{equation}\label{ay2222}
    \mathcal{R}_{ij}(\lambda_n) = [1 - \lambda_n \mathcal{C}_{ij;ij}(k_n)]^{-1}
\quad \mathrm{for } \quad k_n = \sqrt{|E(\lambda_n)|}
\end{equation}
are uniformly bounded for all $n$ and converge in norm when $n \to \infty$.
\end{lemma}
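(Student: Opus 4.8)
The plan is to reduce everything to parts (a) and (b) of Lemma~\ref{lem:7,5}, once I observe that $k_n \to 0$. Indeed, requirement R5 states $E(\lambda_n) \to E_{thr}(\lambda_{cr})$, and under R7 one has $E_{thr}(\lambda_{cr}) = 0$ (this is the remark recorded right after R7), so $E(\lambda_n) \to 0$ and therefore $k_n = \sqrt{\abs{E(\lambda_n)}} \to 0$ as $n \to \infty$. This is the only genuinely new input; the rest is a standard perturbation-of-inverse argument.

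For the uniform boundedness I would simply invoke Lemma~\ref{lem:7,5}(b): since $\lambda_n \norm{\mathcal{C}_{ij;ij}(k_n)} < 1-\delta$ with $\delta>0$ independent of $n$, the operator $\lambda_n \mathcal{C}_{ij;ij}(k_n)$ has norm bounded away from $1$, so its Neumann series converges and shows that $\mathcal{R}_{ij}(\lambda_n)$ exists with $\norm{\mathcal{R}_{ij}(\lambda_n)} \leq \delta^{-1}$ for every $n$.

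For the norm convergence I would show that $A_n := 1 - \lambda_n \mathcal{C}_{ij;ij}(k_n)$ converges in norm to $A := 1 - \lambda_{cr}\mathcal{C}_{ij;ij}(0)$. The triangle inequality gives
\[
\norm{A_n - A} \leq \abs{\lambda_n - \lambda_{cr}}\,\norm{\mathcal{C}_{ij;ij}(k_n)} + \lambda_{cr}\,\norm{\mathcal{C}_{ij;ij}(k_n) - \mathcal{C}_{ij;ij}(0)},
\]
where the first term vanishes because $\lambda_n \to \lambda_{cr}$ (R1) while $\norm{\mathcal{C}_{ij;ij}(k_n)}$ stays bounded (by Lemma~\ref{lem:7,5}(b), or because it converges by part (a)), and the second term vanishes by the \emph{norm} continuity of $\mathcal{C}_{ij;ij}$ at $z=0$ furnished by Lemma~\ref{lem:7,5}(a) together with $k_n \to 0$. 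The same Birman--Schwinger estimate used in Lemma~\ref{lem:7,5}(b), now applied at $\lambda_{cr}$ and $z=0$, shows that $A$ is itself invertible with $\norm{A^{-1}} \leq \delta^{-1}$. Then the resolvent-type identity $A_n^{-1} - A^{-1} = A_n^{-1}(A - A_n)A^{-1}$ and the uniform bound $\delta^{-1}$ on both inverses yield $\norm{\mathcal{R}_{ij}(\lambda_n) - A^{-1}} \leq \delta^{-2}\,\norm{A_n - A} \to 0$, so $\mathcal{R}_{ij}(\lambda_n)$ converges in norm to $[1 - \lambda_{cr}\mathcal{C}_{ij;ij}(0)]^{-1}$.

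The one point deserving care — what I regard as the crux rather than a real obstacle — is that the convergence $\mathcal{C}_{ij;ij}(k_n) \to \mathcal{C}_{ij;ij}(0)$ must be in \emph{norm} and not merely strong: only norm convergence of $A_n$ lets one pass the limit through the operator inversion via the identity above. This is precisely why Lemma~\ref{lem:7,5}(a) was stated with norm continuity down to $z=0$, and everything else in the argument is routine.
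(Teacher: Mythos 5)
Your proof is correct and takes essentially the same route as the paper: both rest entirely on Lemma~\ref{lem:7,5}, namely the norm convergence $\mathcal{C}_{ij;ij}(k_n)\to\mathcal{C}_{ij;ij}(0)$ (together with $k_n\to 0$, which you rightly flag) and the uniform bound $\lambda_n\|\mathcal{C}_{ij;ij}(k_n)\|<1-\delta$. The only cosmetic difference is that the paper passes to the limit via the von Neumann series while you use the second resolvent identity $A_n^{-1}-A^{-1}=A_n^{-1}(A-A_n)A^{-1}$; these are interchangeable.
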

\begin{proof}
By the previous Lemma the operators $\mathcal{C}_{ij;ij}(k_n)$ converge in norm
to  $\mathcal{C}_{ij;ij}(0) $ and $\lambda_n \| C_{ij;ij} (k_n) \| < 1-
\delta$,
where $\delta > 0$ is a constant. Now the result follows from expanding
(\ref{ay2222}) in von Neumann series. \end{proof}

\begin{lemma}\label{lem:8}
For $(ik) \neq (jm)$ the operator function $B^{-1}_{ik} (z) \mathcal{C}_{ik;jm}
(z) $
is uniformly norm--bounded for $z \in (0,1]$ and strongly continuous for $z \to
+0$.
\end{lemma}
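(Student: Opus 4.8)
The plan is to conjugate everything by the partial Fourier transform $\mathcal{F}_{ik}$, exactly as in the proof of Lemma~\ref{lem:6}, and reduce the operator to a family of three--dimensional integral operators in the variable $x$, parametrised by the transverse momentum $p_y$. In this representation $B_{ik}^{-1}(z)$ becomes multiplication by $[1+z+t(p_y)]^{-1}$, which equals $(z+\sqrt{|p_y|})^{-1}$ for $|p_y|\le 1$ and $(1+z)^{-1}$ for $|p_y|>1$, while $V_{ik}^{1/2}$ stays a multiplication operator in $x$ and $(H_0+z^2)^{-1}$ becomes $(-\Delta_x+p_y^2+z^2)^{-1}$ with the Yukawa kernel of (\ref{redew6}). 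The essential difficulty is that $B_{ik}^{-1}(z)$ is genuinely unbounded as $z\to+0$: at $p_y=0$ its multiplier is $1/z$. Hence the bound cannot come from $\|B_{ik}^{-1}(z)\|\,\|\mathcal{C}_{ik;jm}(z)\|$, and the cancellation must be extracted from the hypothesis $(ik)\ne(jm)$.

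To exploit this transversality I would note that, since the pair $(jm)$ differs from $(ik)$, in the $(ik)$ Jacobi coordinates the relative coordinate $r_j-r_m$ equals $\beta y$ plus a multiple of $x$ with $\beta\ne0$; therefore after the partial Fourier transform $V_{jm}^{1/2}$ acts as a convolution in $p_y$ (times multiplication in $x$) whose kernel has modulus $g(p_y-p_y')$ \emph{independent of} $x$, where $g\in L^2(\mathbb{R}^3)$ with $\|g\|_2^2$ proportional to $\|V_{jm}\|_{L^1}$, finite by R6. This $L^1$ integrability of the transverse potential is precisely what smears the small--$p_y$ region and compensates the singularity of $B_{ik}^{-1}$.

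For the uniform bound I would split the $p_y$--integral at $|p_y|=1$. On $|p_y|>1$ the multiplier of $B_{ik}^{-1}(z)$ is at most $1$, so that part is dominated by $\|\mathcal{C}_{ik;jm}(z)\|^2\|\phi\|^2$, uniformly bounded by Lemma~\ref{lem:7,5}. On $|p_y|\le1$ I would estimate the three--dimensional resolvent by Cauchy--Schwarz exactly as in (\ref{redew4}), producing the constant $\int V_{ik}(\alpha x)\,d^3x=c$ together with a factor $(p_y^2+z^2)^{-1/2}\le|p_y|^{-1}$; combined with $[1+z+t(p_y)]^{-2}\le|p_y|^{-1}$ this yields the weight $|p_y|^{-2}$. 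The integral $\int_{|p_y|\le1}|p_y|^{-2}\,d^3p_y=4\pi$ is finite thanks to the three--dimensional measure, and pairing it with the pointwise bound $\|g*h\|_\infty\le\|g\|_2\|h\|_2$ for the transverse convolution closes the estimate with a constant independent of $z\in(0,1]$. I expect this interplay of the three factors $|p_y|^{-1}$ (from $B_{ik}^{-2}$), $|p_y|^{-1}$ (from the resolvent) and the measure $|p_y|^2\,d|p_y|$ to be the crux of the whole argument.

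For strong continuity as $z\to+0$ I would fix $\phi$ and show that $\|[B_{ik}^{-1}(z)\mathcal{C}_{ik;jm}(z)-B_{ik}^{-1}(0)\mathcal{C}_{ik;jm}(0)]\phi\|\to0$ by dominated convergence in $p_y$. The integrand converges pointwise for almost every $p_y$ because $[1+z+t(p_y)]^{-1}\to[1+t(p_y)]^{-1}$ and, by the norm continuity of $\mathcal{C}_{ik;jm}(z)$ from Lemma~\ref{lem:7,5}, $\mathcal{C}_{ik;jm}(z)\phi\to\mathcal{C}_{ik;jm}(0)\phi$ in $L^2(\mathbb{R}^6)$, so along a subsequence the $x$--slices converge in $L^2_x$ for almost every $p_y$. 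The crucial observation is that the dominating function produced by the uniform bound is itself independent of $z$: the transverse factor $\eta=\mathcal{F}_{ik}V_{jm}^{1/2}\mathcal{F}_{ik}^{-1}\phi$ carries no $z$--dependence and only the resolvent does, so the same $|p_y|^{-2}\|\eta(\cdot,p_y)\|_{L^2_x}^2$, integrable over $|p_y|\le1$, dominates the integrand for every $z$. A standard subsequence argument then upgrades almost--everywhere convergence plus domination to norm convergence, while the $|p_y|>1$ part is controlled directly by the norm continuity of $\mathcal{C}_{ik;jm}(z)$.
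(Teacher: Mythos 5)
Your proposal is correct and follows essentially the same route as the paper: conjugate by $\mathcal{F}_{ik}$, use Lemma~\ref{lem:7,5} for the part where the $B^{-1}_{ik}$ multiplier is bounded, and control the singular region $|p_y|\le 1$ by an explicit kernel estimate in which the factors $[z+\sqrt{|p_y|}]^{-2}\le |p_y|^{-1}$ and $(p_y^2+z^2)^{-1/2}\le|p_y|^{-1}$ are absorbed by the three--dimensional measure, with the $L^2$ convolution kernel $\widehat{V^{1/2}_{jm}}$ supplying the constant $\tilde c$; strong continuity then follows by dominated convergence on the explicit kernel. The only cosmetic difference is that you split the momentum region at $|p_y|=1$ whereas the paper subtracts the multiplier $(1+z)^{-1}$ from $B^{-1}_{ik}(z)$, which amounts to the same decomposition since $t(p_y)$ vanishes for $|p_y|>1$.
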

\begin{proof}
We focus on $B^{-1}_{12} (z) \mathcal{C}_{12;23} (z) $, the other indices are
treated similarly. Let us show that $\mathcal{F}_{12} B^{-1}_{12} (z)
\mathcal{C}_{12;23} (z)\mathcal{F}_{12}^{-1}$ is uniformly bounded for $z \in
(0,1]$.
\begin{equation}\label{splitti}
    \mathcal{F}_{12} B^{-1}_{12} (z) \mathcal{C}_{12;23}
(z)\mathcal{F}_{12}^{-1} = K_1 (z)+ K_2(z),
\end{equation}
where
\begin{eqnarray}
 K_1 (z)= \frac 1{z + 1} \mathcal{F}_{12} \mathcal{C}_{12;23}
(z)\mathcal{F}_{12}^{-1} ,  \label{k1}\\
  K_2 (z)= \mathcal{F}_{12} \Bigl(B^{-1}_{12}(z) - \frac 1{z + 1}
\Bigr)\mathcal{C}_{12;23} (z)\mathcal{F}_{12}^{-1} .  \label{k2}
\end{eqnarray}
$K_1 (z)$ is uniformly norm--bounded and for $z \to +0$ converges in norm by
Lemma~\ref{lem:7,5}. Below we prove that the Hilbert-Schmidt norm of $K_2 (z)$
is bounded for $z\in (0,1]$.
Let us first consider the Fourier transformed interaction term $\mathcal{F}_{12}
V^{1/2}_{23} \mathcal{F}_{12}^{-1}$.
In  Jacobi coordinates the interaction term has the form $V^{1/2}_{23} =
V^{1/2}_{23}(\beta x + \gamma y)$,
where $\beta $ and $\gamma \neq 0$ are real constants depending on masses $\beta
= -m_2 \hbar / ((m_1 + m_2)\sqrt{2\mu_{12}})$
and $\gamma = \hbar/\sqrt{2M_{12}}$. The Fourier transformed operator acts on
$\phi(x, p_y)$ as
 \begin{equation}\label{four1}
\fl \mathcal{F}_{12} V^{1/2}_{23} \mathcal{F}_{12}^{-1} \phi = \frac 1{(2\pi)^{3/2}
\gamma^3}\int d^3 p'_y  \widehat{V^{1/2}_{23}} ((p_y - p'_y)/\gamma)
\exp{\Bigl\{i\frac{\beta}{\gamma} x \cdot (p_y - p'_y)\Bigr\}}\phi(x, p'_y),
\end{equation}
where $\widehat{V^{1/2}_{23}} \in L^2(\mathbb{R}^3)$ is a Fourier transform of
$V^{1/2}_{23} \in L^2(\mathbb{R}^3)$.
For the kernel of $K_2 (z)$ we get
 \begin{eqnarray}
K_2 (x,p_y;x',p'_y) = \frac 1{2^{7/2}\pi^{5/2} \gamma^3} \left[\frac 1{z + 1 +
t(p_y)} - \frac 1{z + 1} \right] V_{12}^{1/2} (\alpha x) \nonumber \\
\times \frac{e^{-\sqrt{p_y^2 + z^2} |x-x'|}}{|x-x'|}
\exp{\Bigl\{i\frac{\beta}{\gamma} x' \cdot (p_y - p'_y)\Bigr\}}
\widehat{V^{1/2}_{23}} ((p_y - p'_y)/\gamma) . \label{four27}
\end{eqnarray}
For the square of the Hilbert-Schmidt norm we obtain
 \begin{equation}\label{four7}
\| K_2 (z) \|^2_2 = \frac 1{2^7 \pi^5 } cc' {\tilde c} \int_{|p_y| \leq 1} d^3
p_y \; \left[\frac 1{z + \sqrt{|p_y|}} - \frac 1{z + 1} \right]^2 \frac
1{\sqrt{p_y^2 + z^2}} ,
\end{equation}
where $c, c'$ are defined in (\ref{aijacts9})--(\ref{aijacts10}) and
\begin{equation}\label{four8}
    {\tilde c} = \frac 1{\gamma^6} \int d^3 p'_y |\widehat{V^{1/2}_{23}}
(p'_y/\gamma)|^2
\end{equation}
is finite because $\widehat{V^{1/2}_{23}} \in L^2$.
Estimating the integral in (\ref{four7}) we finally obtain
 \begin{equation}\label{four9}
\| K_2 (z) \|^2_2 \leq \frac 1{2^7 \pi^5 } cc' {\tilde c} \int_{|p_y| \leq 1}
d^3 p_y \; \frac 1{p_y^2} = \frac 1{2^5 \pi^4} cc' {\tilde c}  .
\end{equation}
The strong continuity of $K_2 (z)$ for $z \to +0$ follows from the explicit form
of the kernel in (\ref{four27}).
\end{proof}

\begin{lemma}\label{lem:9}
Suppose $H(\lambda)$ defined in (\ref{xc31aa})--(\ref{:xc31aa}) satisfies R1,
R4-7. If $\psi(\lambda_{n_k})$ is a weakly convergent subsequence of
$\psi(\lambda_n)$,
then $V^{1/2}_{ij}\psi(\lambda_{n_k})$ converges in norm.
\end{lemma}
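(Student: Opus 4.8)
The plan is to derive Faddeev-type equations for the components $\phi_{ij}^{(n)} := V_{ij}^{1/2}\psi(\lambda_n)$ and then to show that the right-hand side converts the given weak convergence into norm convergence. Writing the eigenvalue equation as $(H_0 + k_n^2)\psi(\lambda_n) = \lambda_n\sum_{lm}V_{lm}\psi(\lambda_n)$ with $k_n = \sqrt{|E(\lambda_n)|}$ and applying $V_{ij}^{1/2}(H_0+k_n^2)^{-1}$ gives $\phi_{ij}^{(n)} = \lambda_n\sum_{lm}\mathcal{C}_{ij;lm}(k_n)\phi_{lm}^{(n)}$. Separating the diagonal term and inverting $1-\lambda_n\mathcal{C}_{ij;ij}(k_n)$ with $\mathcal{R}_{ij}(\lambda_n)$ from Lemma~\ref{lem:7} yields
\begin{equation}
\phi_{ij}^{(n)} = \lambda_n\,\mathcal{R}_{ij}(\lambda_n)\sum_{lm\neq ij}\mathcal{C}_{ij;lm}(k_n)\,\phi_{lm}^{(n)} .
\end{equation}
First I would record two elementary facts: the sequence is uniformly bounded, since $\|\phi_{ij}^{(n)}\|^2 = (\psi(\lambda_n),V_{ij}\psi(\lambda_n))\le(\psi(\lambda_n),F_{ij}\psi(\lambda_n))$ is controlled by Lemma~\ref{lem:1}; and for the weakly convergent subsequence $\psi(\lambda_{n_k})\wto\psi_{cr}$ one has $\phi_{ij}^{(n_k)}\wto V_{ij}^{1/2}\psi_{cr}$ (test against $C_0^\infty$ and use boundedness). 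Thus identifying the norm limit is automatic once norm convergence is established.

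Since $\lambda_n\to\lambda_{cr}$ and $\mathcal{R}_{ij}(\lambda_n)$ converges in norm (Lemma~\ref{lem:7}), it suffices to prove that for every off-diagonal pair $(lm)\neq(ij)$ the sequence $\mathcal{C}_{ij;lm}(k_n)\phi_{lm}^{(n_k)}$ converges in norm. Using the norm continuity of $\mathcal{C}_{ij;lm}$ up to $z=0$ (Lemma~\ref{lem:7,5}) and the boundedness of $\phi_{lm}^{(n)}$, this reduces to showing $\mathcal{C}_{ij;lm}(0)\,w_k\to 0$ in norm for the weakly null, uniformly bounded sequence $w_k := \phi_{lm}^{(n_k)}-V_{lm}^{1/2}\psi_{cr}$. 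Here the difficulty is that $\mathcal{C}_{ij;lm}(0)$ is \emph{not} compact: in the partial Fourier variable $p_y$ conjugate to the spectator coordinate of the pair $(ij)$ its Hilbert--Schmidt norm diverges at large $|p_y|$, which is exactly the obstruction that forced the introduction of $B_{ij}$. I would therefore split the output momentum with $\chi_{\{|p_y|\le N\}}$ and $\chi_{\{|p_y|> N\}}$. On the bounded region the Hilbert--Schmidt integral converges, so $\chi_{\{|p_y|\le N\}}\mathcal{C}_{ij;lm}(0)$ is compact (this is the content behind the $K_2$ estimate of Lemma~\ref{lem:8}), and compactness turns $w_k\wto0$ into $\chi_{\{|p_y|\le N\}}\mathcal{C}_{ij;lm}(0)w_k\to0$ in norm.

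The high-momentum part is controlled not by compactness but by resolvent smoothing together with Lemma~\ref{lem:1}. Writing $\chi_{\{|p_y|> N\}}\mathcal{C}_{ij;lm}(0)\phi_{lm}^{(n)} = \chi_{\{|p_y|> N\}}V_{ij}^{1/2}H_0^{-1}\,V_{lm}\psi(\lambda_n)$, the factor $\chi_{\{|p_y|> N\}}V_{ij}^{1/2}H_0^{-1}$ is diagonal in $p_y$ and, by a Kato-type bound, has operator norm $O(1/N)$, while $\|V_{lm}\psi(\lambda_n)\|\le\|F_{lm}\psi(\lambda_n)\|$ is uniformly bounded by Lemma~\ref{lem:1}. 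Hence the high-momentum contribution is $O(1/N)$ uniformly in $n$, and sending first $k\to\infty$ and then $N\to\infty$ gives $\mathcal{C}_{ij;lm}(0)w_k\to0$. Feeding this back through the norm-convergent prefactor $\lambda_n\mathcal{R}_{ij}(\lambda_n)$ shows that $\phi_{ij}^{(n_k)}$ converges in norm, necessarily to $V_{ij}^{1/2}\psi_{cr}$.

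The main obstacle is precisely this threshold non-compactness at large $p_y$: the off-diagonal Birman--Schwinger kernels fail to be Hilbert--Schmidt, so weak convergence cannot be upgraded to norm convergence by a single compactness argument. The whole point of the $B_{ij}$-regularization of Lemmas~\ref{lem:6} and \ref{lem:8} is to separate cleanly the genuinely compact low-momentum part from a high-momentum remainder that is then tamed by resolvent smoothing and the uniform kinetic-energy bound of Lemma~\ref{lem:1}. I expect the bookkeeping of which $p_y$-variable is being truncated (input versus output, and the change of Jacobi frame when $ij\neq lm$) to be the most delicate point to state carefully.
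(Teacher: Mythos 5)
Your proof is essentially sound, but it takes a genuinely different route from the paper. The paper's own proof of Lemma~\ref{lem:9} never touches the Faddeev system: it uses the Ruelle--Simon partition of unity and the IMS localization formula to write $H(\lambda)=\sum_s J_sH_s(\lambda)J_s+K(\lambda)$, shows that the error term $K$ and the cross terms $J_iV_{ij}$ are relatively $H_0$--compact (Lemma~\ref{lem:5}), deduces that $\bigl((\psi_k-\psi_{cr}),J_lH_l(\lambda_{n_k})J_l(\psi_k-\psi_{cr})\bigr)\to0$, and then invokes R7 in the form $H_l(\lambda)\geq\epsilon V_{ij}$ to convert this into $\bigl((\psi_k-\psi_{cr}),V_{ij}(\psi_k-\psi_{cr})\bigr)\to0$. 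Your argument instead stays entirely inside the Birman--Schwinger/Faddeev framework: you feed the weak convergence through the off-diagonal kernels $\mathcal{C}_{ij;lm}(0)$, splitting them into a low-momentum part that is Hilbert--Schmidt (the finiteness of $\int_{|p_y|\leq N}|p_y|^{-1}d^3p_y$, implicit in the computation leading to (\ref{four7})) and a high-momentum tail of norm $O(1/N)$ obtained from $\|\chi_{\{|p_y|>N\}}(H_0+z^2)^{-1/2}\|\leq 1/N$ together with the uniform bound $\|\mathcal{C}_{ij;ij}(z)\|\leq(\epsilon+\lambda)^{-1}$ and the uniform bound on $\|V_{lm}\psi(\lambda_n)\|$ from Lemma~\ref{lem:1}; I checked these estimates and they go through, and R7 enters your proof only through Lemmas~\ref{lem:7,5}(b) and \ref{lem:7}. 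What each approach buys: the IMS route is geometric, avoids any momentum-cutoff bookkeeping, and generalizes more readily to $N>3$; your route is more self-contained given the integral-equation machinery already set up in Lemmas~\ref{lem:6}--\ref{lem:8}, and it isolates cleanly the functional-analytic mechanism (compact low-momentum part plus uniformly small tail). One small correction: the $B_{ij}$ regularization is not introduced to cure the large-$|p_y|$ failure of the Hilbert--Schmidt property that you identify; it compensates the \emph{small}-$|p_y|$ singularity $\sim|p_y|^{-1/2}$ of $(-\Delta_x+p_y^2+z^2)^{-1}V_{ij}^{1/2}$ so that the limit $z\to+0$ can be taken in Lemmas~\ref{lem:6} and \ref{lem:8}; this mislabeling does not affect your argument.
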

\begin{proof}
Let $J_s \in C^{2}
(\mathbb{R}^{3N-3})$ denote the Ruelle--Simon partition of unity, see Definition
3.4 and Proposition 3.5 in \cite{ims}.
For $s=1,2,3$ one has $J_s \geq 0$, $\sum_s J^{2}_s =1$ and $J_s (\lambda x) =
J_s (x)$ for $\lambda \geq 1$ and $|x|= 1$. Besides there exists $C > 0$ such
that for $i \neq s$
\begin{equation}\label{ims5}
    \supp J_s \cap \{ x | |x| > 1 \} \subset \{x|\; |r_i - r_s | \geq C |x|\} .
\end{equation}
By the IMS formula (Theorem 3.2 in \cite{ims}) the
Hamiltonian $H(\lambda)$ can be decomposed as
\begin{equation}\label{ims}
    H (\lambda) = \sum_{s=1}^3 J_s H_s (\lambda) J_s + K(\lambda) ,
\end{equation}
where
\begin{eqnarray}
\fl H_s (\lambda) = H_0 - \lambda V_{lm}, \quad \quad (l\neq s, m \neq s)\\
\fl K(\lambda) = - \lambda \sum_{s=1}^3  (V_{ls} + V_{ms} )  |J_s |^2 + \sum_{s=1}^3
 |\nabla J_s |^2 \label{K} \quad \quad (l\neq s, m \neq s , l \neq m) .
\end{eqnarray}
By the properties of $J_s$ one has $|\nabla J_s |^2 \in L^\infty_\infty
(\mathbb{R}^{3N-3})$, which makes $|\nabla J_s |^2$ relatively $H_0$--compact,
see Lemma~7.11 in \cite{teschl}.

By condition of the lemma $\psi_k \wto \psi_{cr}$, where $\psi_{cr} \in D(H_0)$
by Lemma~\ref{lem:2} and for brevity we denote
$\psi_k := \psi(\lambda_{n_k})$. We shall prove the lemma in three steps given
by the following equations
\begin{eqnarray}
(a) \quad \lim_{k \to \infty} \Bigl((\psi_k - \psi_{cr} ) , K(\lambda_{n_k})
(\psi_k - \psi_{cr} ) \Bigr) = 0 \label{kg0} \\
(b) \quad \lim_{k \to \infty} \Bigl((\psi_k - \psi_{cr} )  , H(\lambda_{n_k})
(\psi_k - \psi_{cr} )  \Bigr)= 0 \label{kg0:b}\\
(c) \quad \lim_{k \to \infty} \Bigl((\psi_k - \psi_{cr} )  , V_{ij}  (\psi_k -
\psi_{cr} )  \Bigr)= 0 . \label{kg0:c}
\end{eqnarray}
From $(c)$ the statement of the lemma clearly follows. Let us
start with $(a)$. From R6 we have
\begin{equation}\label{tobe}
 |(f,K(\lambda) f)| \leq (f,\tilde K f) \quad \quad (\forall f \in D(H_0) ) ,
\end{equation}
where the operator $\tilde K$ is defined through
\begin{eqnarray}
\tilde K = \lambda \sum_{s=1}^3  (F_{ls} + F_{ms} )  |J_s |^2 + \sum_{s=1}^3
|\nabla J_s |^2 \label{tilk1}  \quad \quad (l\neq s, m \neq s , l \neq m) .
\end{eqnarray}
$\tilde K$ is relatively $H_0$--compact by Lemma~\ref{lem:5} and thus by
Lemma~\ref{lem:2}
\begin{equation}\label{j233}
((\psi_k - \psi_{cr} ) , \tilde K (\psi_k - \psi_{cr} ) ) \to 0
\end{equation}
This proves (a). Rewriting the expression in (b) we obtain
\begin{eqnarray}
\bigl( (\psi_k - \psi_{cr} ), H(\lambda_{n_k}) \: (\psi_k - \psi_{cr} )\bigr)
= E(\lambda_{n_k}) \bigl((\psi_k - \psi_{cr} ), \psi_k\bigr) \nonumber\\
- \bigl((\psi_k -\psi_{cr} ), H(\lambda_{cr})\psi_{cr} \bigr)  - [\lambda_{n_k} -
\lambda_{cr}]\bigl((\psi_k -\psi_{cr} ), V\psi_{cr} \bigr) \label{tobe34} ,
\end{eqnarray}
where we have used $H(\lambda_{n_k}) = H(\lambda_{cr})
+ [\lambda_{n_k} - \lambda_{cr}]V$. All terms on the rhs of
(\ref{tobe34}) go to zero because $E(\lambda_{n_k}) \to 0$ and
$\psi_k \wto \psi_{cr}$. It remains to be shown that $(c)$ is true.
\begin{eqnarray}
\fl \lim_{k \to \infty} \Bigl((\psi_k - \psi_{cr} )  , V_{ij}  (\psi_k - \psi_{cr} )
 \Bigr)= \sum_{s=1}^3 \lim_{k \to \infty} \Bigl((\psi_k - \psi_{cr} )  , J_s
V_{ij} J_s (\psi_k - \psi_{cr} )  \Bigr) \nonumber\\
\fl = \lim_{k \to \infty} \Bigl((\psi_k - \psi_{cr} )  , J_l V_{ij} J_l (\psi_k -
\psi_{cr} )  \Bigr) \quad \quad (l \neq i \neq j) \label{sima2},
\end{eqnarray}
where we have used that $J_i V_{ij}$ and $J_j V_{ij}$ are relatively
$H_0$--compact by Lemma~\ref{lem:5} and the corresponding scalar products vanish
by Lemma~\ref{lem:2}.

From (a), (b) and (\ref{ims}) we obtain
\begin{equation}\label{tobe4}
\bigl((\psi_k - \psi_{cr} ), J_l  H_{l} (\lambda_{n_k}) J_l (\psi_k - \psi_{cr}
)  \bigr)
\to 0 \quad \quad (\forall l) .
\end{equation}
Together with R7 this gives us
\begin{equation}\label{kotz1}
\lim_{k \to \infty} \left((\psi_k - \psi_{cr} )  , J_l V_{ij} J_l (\psi_k -
\psi_{cr} )  \right) = 0\quad \quad (l \neq i \neq j) .
\end{equation}
Finally, comparing (\ref{kotz1}) and (\ref{sima2}) we conclude
that (c) holds. \end{proof}

\begin{proof}[Proof of Theorem~\ref{th:2}]
It is enough to show that any weakly converging subsequence of $\psi(\lambda_n)$
converges in norm.
Indeed, in this case $\psi(\lambda_n)$ does not spread by Lemma~\ref{lem:4}  and
thus by Theorem~\ref{th:1}
there must exist a bound state at threshold. Suppose $\psi(\lambda_{n_s}) $ is a
weakly converging subsequence,
that is  $\psi(\lambda_{n_s}) \wto \psi_{cr}$ and we must prove $\|
\psi(\lambda_{n_s}) -\psi_{cr}\| \to 0$.

By Schr\"odinger equation for $k_{n_s}^2 = -E_{n_s} >0$
\begin{equation}\label{nnr1}
 \fl   \psi(\lambda_{n_s}) = \lambda_{n_s} \sum_{i<j} [H_0 + k_{n_s}^2]^{-1} V_{ij}
\psi(\lambda_{n_s}) = \lambda_{n_s} \sum_{i<j} \mathcal{A}_{ij} (k_{n_s})
\bigl[ B^{-1}_{ij} (k_{n_s}) V^{1/2}_{ij}\psi(\lambda_{n_s}) \bigr] ,
\end{equation}
where $\mathcal{A}_{ij}$ is defined in (\ref{ya23}). By Lemma~\ref{lem:6}
$\psi(\lambda_{n_s})$ converges in norm if the sequence $B^{-1}_{ij} (k_{n_s})
V^{1/2}_{ij}\psi(\lambda_{n_s}) $ does. The convergence of the latter we prove
below. From (\ref{nnr1}) we obtain
\begin{equation}\label{nnr2}
    V^{1/2}_{ij} \psi(\lambda_{n_s}) = \lambda_{n_s} \sum_{l < m}
\mathcal{C}_{ij;lm} (k_{n_s}) [V^{1/2}_{lm} \psi(\lambda_{n_s}) ] .
\end{equation}
Using (\ref{ay2222}) we rewrite (\ref{nnr2})
\begin{equation}\label{nnr4}
    V^{1/2}_{ij} \psi(\lambda_{n_s}) = \lambda_{n_s} \mathcal{R}_{ij} (k_{n_s})
\sum_{l < m \\ (lm) \neq (ij)} \mathcal{C}_{ij;lm}(k_{n_s}) (V^{1/2}_{lm}
\psi(\lambda_{n_s}) ) .
\end{equation}
Now we act with $B^{-1}_{ij}(k_{n_s})$ on both parts of (\ref{nnr4}) and use
that it commutes with $\mathcal{R}_{ij} (k_{n_s})$
\begin{equation}\label{nnr5}
 \fl   B^{-1}_{ij}(k_{n_s}) V^{1/2}_{ij} \psi(\lambda_{n_s}) = \lambda_{n_s}
\mathcal{R}_{ij} (k_{n_s}) \sum_{l < m \\
    (lm) \neq (ij)} B^{-1}_{ij}(k_{n_s}) \mathcal{C}_{ij;lm}(k_{n_s})
\bigl(V^{1/2}_{lm} \psi(\lambda_{n_s}) \bigr)   .
\end{equation}
By Lemmas~\ref{lem:7},\ref{lem:8},\ref{lem:9} the rhs converges in
norm.\end{proof}

In the next sections our aim is to analyse the case when one pair of particles
has a zero--energy resonance. We would show (Theorem~\ref{Q@th:1}) that in this
case Theorem~\ref{th:2} does not generally hold. 
This shows that the condition of Theorem~\ref{th:2} on the absence of resonances
in particle pairs is essential.

\section{A Zero Energy Resonance in a 2--Particle System}\label{Q@sec:2}

In this section we shall use the method of \cite{klaus1} to prove a result
similar to Lemma~2.2 in \cite{sobol}. Let us consider the Hamiltonian of 2
particles in $\mathbb{R}^3$
\begin{equation}\label{Q@2ps}
h_{12}(\varepsilon) := - \Delta_x - (1+\varepsilon) V_{12} (\alpha x) ,
\end{equation}
where $\varepsilon \geq 0$ is a parameter and $\alpha$ is defined right after
(\ref{redew7}). Additionally, we require
\begin{list}{$\overline{R}$\arabic{foo}}
{\usecounter{foo}
    \setlength{\rightmargin}{\leftmargin}}
\item
$0\leq V_{12} (\alpha x) \leq b_1 e^{-b_2 |x|}$, where $b_{1,2} >0$ are
constants.
\item
$h_{12}(0) \geq 0$ and $\sigma (h_{12}(\varepsilon)) \cap (-\infty, 0) \neq
\emptyset$ for $\varepsilon >0$. \end{list}
The requirement $\overline{R}$2 means that $h_{12}(0)$ has a resonance at zero
energy, that is, negative energy bound states emerge iff the coupling constant
is incremented by an arbitrary amount 
(in terminology of \cite{klaus1} the system is at the  coupling constant
threshold).

The following integral operator appears in the Birman--Schwinger principle
\cite{reed,klaus1}
\begin{equation}\label{Q@bms}
    L(k) := \sqrt{V_{12}} \Bigl(- \Delta_x + k^2 \Bigr)^{-1} \sqrt{V_{12}}.
\end{equation}
$L(k)$ is analytic for $\real k >0$. Due to $\overline{R}$1 one can use the
integral representation and analytically continue $L(k)$ into the interior of
the disk on the complex plane, which has its centre at $k=0$ and the 
radius $|b_2|$ \cite{klaus1}. The analytic continuation is denoted as $\tilde
L(k) = \sum_n \tilde L_n k^n$, where $\tilde L_n$ are Hilbert-Schmidt operators.
\begin{remark}
In Sec.~2 in \cite{klaus1} (page 255) Klaus and Simon consider only finite range
potentials. In this case $L(k)$ can be analytically continued into the whole
complex plane. As the authors mention it in Sec.~9 the case of potentials with
an exponential fall off requires only a minor change: $L(k)$ extends
analytically as a bounded operator to the domain $\{k|\: \real k > -b_2 \}$.
\end{remark}

Under requirements $\overline{R}$1, $\overline{R}$2 the operator $L(0) = \tilde
L(0)$ is Hilbert-Schmidt and its maximal eigenvalue is equal to one
\begin{equation}\label{refsugg14}
L(0) \phi_0 = \phi_0 .
\end{equation}
$L(0)$ is positivity--preserving, hence, the maximal eigenvalue is
non--degenerate and $\phi_0 \geq 0$. We choose the normalization, where
$\|\phi_0 \| = 1$.

By the standard Kato--Rellich perturbation theory \cite{kato,reed} there exists
$\rho > 0$ such that for $|k| \leq \rho$
\begin{equation}
\tilde L(k) \phi(k)= \mu(k)\phi(k) ,
\end{equation}
where $\mu(k), \phi(k)$ are analytic, $\mu(0) = 1$, $\phi(0)=\phi_0$ and the
eigenvalue $\mu(k)$ is non--degenerate.
By Theorem 2.2 in \cite{klaus1}
\begin{equation}\label{Q@muexp}
\mu (k) = 1 - a k + O(k^2) ,
\end{equation}
where
\begin{equation}\label{Q@whatisa}
a = (\phi_0, (V_{12})^{1/2})^2/(4\pi) > 0 .
\end{equation}
The orthonormal projection operators
\begin{eqnarray}
\mathbb{P}(k) := (\phi(k), \cdot)\phi(k) = (\phi_0, \cdot)\phi_0 +
\mathcal{O}(k) , \label{Q@prp}\\
\mathbb{Q}(k) := 1 - \mathbb{P}(k) \label{Q@prq}
\end{eqnarray}
are analytic for $|k| < \rho$ as well. Our aim is to analyse the following
operator function on $k \in (0, \infty)$
\begin{equation}\label{Q@defw}
W(k) = [1-L(k)]^{-1} .
\end{equation}
By the Birman--Schwinger principle $\| L(k) \| < 1$ for $k >0$, which makes
$W(k)$ well--defined.
\begin{lemma}\label{Q@lem:1}
There exists $0< \rho_0 < 1$ such that for $k \in (0,\rho_0)$
\begin{equation}\label{Q@expans}
W(k) = \frac{\mathbb{P}_0}{ak} + \mathcal{Z}(k) ,
\end{equation}
where $\mathbb{P}_0 := (\phi_0 , \cdot)\phi_0 $ and $\sup_{k \in (0,\rho_0)}
\|\mathcal{Z}(k)\| < \infty$.
\end{lemma}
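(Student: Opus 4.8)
**

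The plan is to isolate the singular part of $W(k) = [1 - L(k)]^{-1}$ coming from the eigenvalue $\mu(k)$ of $\tilde{L}(k)$ that passes through $1$ at $k=0$, using the spectral decomposition provided by the analytic projections $\mathbb{P}(k), \mathbb{Q}(k)$. Since $\tilde{L}(k)$ is analytic on $|k| < \rho$ and $\mathbb{P}(k)$ projects onto the one-dimensional eigenspace of the leading eigenvalue $\mu(k)$, I would decompose $1 - L(k) = [1 - \mu(k)]\mathbb{P}(k) + \mathbb{Q}(k)[1 - \tilde{L}(k)]\mathbb{Q}(k)$ on the disk. The operator $W(k)$ then splits accordingly, and the whole singularity as $k \to +0$ is carried by the scalar factor $[1 - \mu(k)]^{-1}$ multiplying $\mathbb{P}(k)$, because the restriction of $1 - \tilde{L}(k)$ to the range of $\mathbb{Q}(k)$ stays invertible with uniformly bounded inverse for small $k$ (the eigenvalue $\mu(k)$ is non-degenerate and isolated, so the rest of the spectrum of $\tilde{L}(0)$ is bounded away from $1$).

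The concrete steps would be as follows. First I would write $W(k) = [1-\mu(k)]^{-1}\mathbb{P}(k) + \mathbb{Q}(k)[\mathbb{Q}(k)(1 - \tilde{L}(k))\mathbb{Q}(k)]^{-1}\mathbb{Q}(k)$, valid for $k \in (0, \rho_0)$ with $\rho_0$ small enough that the $\mathbb{Q}$-part inverse exists and is analytic (hence uniformly bounded) on the disk; call this second term $\mathcal{Z}_1(k)$. Second, I would use the expansion (\ref{Q@muexp}), namely $\mu(k) = 1 - ak + O(k^2)$ with $a > 0$, to write
\begin{equation}
\frac{1}{1-\mu(k)} = \frac{1}{ak + O(k^2)} = \frac{1}{ak}\cdot\frac{1}{1 + O(k)} = \frac{1}{ak} + \mathcal{O}(1),
\end{equation}
so that $[1-\mu(k)]^{-1}\mathbb{P}(k) = (ak)^{-1}\mathbb{P}(k) + \mathcal{O}(1)$. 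Third, using the analyticity (\ref{Q@prp}), $\mathbb{P}(k) = \mathbb{P}_0 + \mathcal{O}(k)$, I would combine this with the $(ak)^{-1}$ prefactor to get $(ak)^{-1}\mathbb{P}(k) = (ak)^{-1}\mathbb{P}_0 + (ak)^{-1}\mathcal{O}(k)$, where the last term is $\mathcal{O}(1)$ in norm on $(0,\rho_0)$. Collecting all the $\mathcal{O}(1)$ contributions into $\mathcal{Z}(k) := \mathcal{Z}_1(k) + [(1-\mu(k))^{-1} - (ak)^{-1}]\mathbb{P}(k) + (ak)^{-1}[\mathbb{P}(k) - \mathbb{P}_0]$ yields exactly (\ref{Q@expans}) with $\sup_{k \in (0,\rho_0)} \|\mathcal{Z}(k)\| < \infty$.

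The main obstacle is justifying that $W(k)$ genuinely decomposes along the spectral splitting induced by $\mathbb{P}(k)$ and $\mathbb{Q}(k)$, i.e.\ that $L(k)$ (as opposed to only its analytic continuation $\tilde L(k)$) commutes with these projections and that the $\mathbb{Q}$-block is boundedly invertible near $k=0$. For real $k > 0$ one has $L(k) = \tilde{L}(k)$, so $L(k)$ inherits the block-diagonal structure in the decomposition $\mathrm{ran}\,\mathbb{P}(k) \oplus \mathrm{ran}\,\mathbb{Q}(k)$, and one must verify that $\mathbb{Q}(k)(1 - \tilde L(k))\mathbb{Q}(k)$ is invertible on $\mathrm{ran}\,\mathbb{Q}(k)$ with a norm-bounded inverse. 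This follows because at $k=0$ the operator $1 - \tilde L(0)$ restricted to $\mathrm{ran}\,\mathbb{Q}(0)$ is invertible (the eigenvalue $1$ is non-degenerate and isolated in the spectrum of the Hilbert--Schmidt operator $L(0)$), and invertibility is stable under the analytic, hence norm-continuous, perturbation for $|k| < \rho_0$. Once this uniform invertibility on the $\mathbb{Q}$-subspace is secured, everything else reduces to the elementary scalar asymptotics recorded above.
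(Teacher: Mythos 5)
Your proposal is correct and follows essentially the same route as the paper: split $W(k)$ along the analytic spectral projections $\mathbb{P}(k)$, $\mathbb{Q}(k)$, observe that the $\mathbb{Q}$--block of $1-L(k)$ is uniformly invertible near $k=0$ because $\mu(k)$ is the non--degenerate isolated top eigenvalue, and then extract the $\tfrac{1}{ak}\mathbb{P}_0$ singularity from the scalar expansions $\mu(k)=1-ak+O(k^2)$ and $\mathbb{P}(k)=\mathbb{P}_0+\mathcal{O}(k)$. The only cosmetic difference is that the paper writes the regular part as $[1-\mathbb{Q}(k)L(k)]^{-1}\mathbb{Q}(k)$ and bounds it via $\sup_k\|\mathbb{Q}(k)L(k)\|<1$, which is equivalent to your stability--of--invertibility argument for the $\mathbb{Q}$--block.
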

\begin{proof}
$\tilde L(k) = L(k)$ when $k \in (0,\rho)$. We get from (\ref{Q@defw})
\begin{eqnarray}
W(k) = [1-L(k)]^{-1} = [1-L(k)]^{-1} \mathbb{P}(k)  +  [1-L(k)]^{-1}
\mathbb{Q}(k) \nonumber\\
= [1-\mu(k) \mathbb{P}(k) ]^{-1} \mathbb{P}(k) + [1-\mathbb{Q}(k) L(k)]^{-1}
\mathbb{Q}(k) \nonumber\\
= \frac 1{1-\mu(k)} \mathbb{P}(k) + \mathcal{Z}'(k),
\end{eqnarray}
where
\begin{equation}
\mathcal{Z}'(k) := [1-\mathbb{Q}(k) L(k)]^{-1} \mathbb{Q}(k) .
\end{equation}
Note that $\sup_{k \in (0,\rho)}\| \mathbb{Q}(k) L(k)\| < 1$  because the
eigenvalue $\mu(k)$ remains isolated for $k \in [0,\rho)$. Thus $\mathcal{
Z}'(k) = \mathcal{O}(1)$. Using (\ref{Q@muexp}),(\ref{Q@whatisa}) and (\ref{Q@prp}) proves
the lemma. Clearly, one can always choose $\rho_0 < 1$. \end{proof}

\begin{remark}
The singularity of $W(k)$ near $k=0$ has been analysed in \cite{sobol} (Lemma
2.2 in \cite{sobol}), see also \cite{yafaev}). 
The decomposition (\ref{Q@expans}) differs in the sense that $\mathcal{Z}(k)$ is
uniformly bounded in the vicinity of $k=0$. 
The price we paid for it is the requirement $\overline{R}$1 on the exponential
fall off of $V_{12}$.
\end{remark}

\section{Zero Energy Resonance in a 3--Particle system}\label{Q@sec:3}

Let us consider the Schr\"odinger operator for three particles in $\mathbb{R}^3$
\begin{equation}\label{Q@xc31aa}
    H = H_0 - V_{12}(r_1 - r_2) - V_{13}(r_1 - r_3) - V_{23}(r_2 - r_3),
\end{equation}
where $r_i$ are particle position vectors and $H_0$ is the kinetic energy
operator with the centre of mass removed.
Apart from $\overline{R}$1, $\overline{R}$2 we shall need the following
additional requirement
\begin{list}{$\overline{R}$\arabic{foo}}
{\usecounter{foo}
    \setlength{\rightmargin}{\leftmargin}}
\setcounter{foo}{2}
\item$V_{13}, V_{23} \in L^2 (\mathbb{R}^3) + L^\infty_\infty (\mathbb{R}^3) $
 and  $V_{13}, V_{23} \geq 0$ and $V_{23} \neq 0$.
\end{list}
Here we shall prove
\begin{theorem}\label{Q@th:1}
Suppose $H$ defined in (\ref{Q@xc31aa}) satisfies $\overline{R}$1,
$\overline{R}$2, $\overline{R}$3. Suppose additionally that $H \geq 0$ and
$H\psi_0 =0$, where $\psi_0 \in D(H_0)$. Then $\psi_0 = 0$.
\end{theorem}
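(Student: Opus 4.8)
The plan is to reduce the equation $H\psi_0=0$ to a Birman--Schwinger system for the functions $\zeta_{ij}:=V_{ij}^{1/2}\psi_0\in L^2(\mathbb{R}^6)$, to isolate the singular $(12)$--channel governed by the operator $W(k)$ of Lemma~\ref{Q@lem:1}, and then to show that the zero--energy resonance of the pair $(12)$ forces a non--$L^2$ tail on $\psi_0$, contradicting $\psi_0\in D(H_0)$.

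First I would fix the sign of $\psi_0$. Since $-V_{ij}\le0$ and $H\ge0$ with $H\psi_0=0$, the function $\psi_0$ is a ground state; the semigroup $e^{-tH}$ is positivity improving, so $\psi_0$ may be taken with $\psi_0>0$ a.e., whence every $\zeta_{ij}\ge0$. Writing $(H_0+k^2)\psi_0=V\psi_0+k^2\psi_0$ and inverting gives, for $k>0$, $\psi_0=(H_0+k^2)^{-1}V\psi_0+k^2(H_0+k^2)^{-1}\psi_0$; the last term tends to $0$ in $L^2$ as $k\to+0$, and, all kernels being positive, the first term increases monotonically to $H_0^{-1}V\psi_0$. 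This yields the threshold identity $\psi_0=H_0^{-1}V\psi_0$ (finite a.e.) and, multiplying by $V_{ij}^{1/2}$, the limiting system $\zeta_{ij}=\sum_{lm}\mathcal{C}_{ij;lm}(0)\zeta_{lm}$. Isolating the $(12)$ equation I obtain $[1-\mathcal{C}_{12;12}(0)]\zeta_{12}=g_0$, where $g_0:=V_{12}^{1/2}H_0^{-1}(V_{13}+V_{23})\psi_0\ge0$ and $g_0\neq0$ because $V_{23}\neq0$ by $\overline{R}$3 and $\psi_0>0$.

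Next I pass to the fibre picture through $\mathcal{F}_{12}$. Since $\mathcal{C}_{12;12}(0)$ acts fibrewise as $L(|p_y|)$, and $\|L(k)\|<1$ for $k>0$, for $p_y\neq0$ I may invert to get $\hat\zeta_{12}(\cdot,p_y)=W(|p_y|)\hat g_0(\cdot,p_y)$. By Lemma~\ref{Q@lem:1}, for small $|p_y|$ one has $\hat\zeta_{12}(\cdot,p_y)=\frac{c(p_y)}{a|p_y|}\phi_0+\mathcal{Z}(|p_y|)\hat g_0(\cdot,p_y)$ with $c(p_y):=(\phi_0,\hat g_0(\cdot,p_y))$; positivity gives $c(0)=(\phi_0,\overline{g_0})>0$, where $\overline{g_0}(x)=(2\pi)^{-3/2}\int g_0(x,y)\,d^3y$, since $\phi_0>0$ a.e. and $\overline{g_0}\ge0$, $\overline{g_0}\neq0$. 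The crucial object is the $(12)$--channel piece $\Phi:=H_0^{-1}V_{12}\psi_0$, which by positivity satisfies $0\le\Phi\le\psi_0$, hence $\Phi\in L^2$. In the fibre, $\hat\Phi(\cdot,p_y)=(-\Delta_x+p_y^2)^{-1}V_{12}^{1/2}\hat\zeta_{12}(\cdot,p_y)$, and the resonance produces $F_{p_y}:=(-\Delta_x+p_y^2)^{-1}V_{12}^{1/2}\phi_0$, whose long--range Yukawa tail $\sim e^{-|p_y||x|}/|x|$ makes $\|F_{p_y}\|_{L^2_x}^2\sim 2\pi C^2/|p_y|$, where $C=(V_{12}^{1/2},\phi_0)/(4\pi)\neq0$ in view of $a=(V_{12}^{1/2},\phi_0)^2/(4\pi)$ from (\ref{Q@whatisa}).

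Combining these facts, the singular term of $\hat\Phi(\cdot,p_y)$ is bounded below in $L^2_x$ by a positive constant times $|p_y|^{-3/2}$, while the regular remainder $(-\Delta_x+p_y^2)^{-1}V_{12}^{1/2}\mathcal{Z}(|p_y|)\hat g_0(\cdot,p_y)$ is only $O(|p_y|^{-1/2})$, so for small $|p_y|$ the singular term dominates. Integrating over $p_y$ then gives $\|\Phi\|^2=\int\|\hat\Phi(\cdot,p_y)\|_{L^2_x}^2\,d^3p_y\ge \mathrm{const}\int_{|p_y|\le\delta}|p_y|^{-3}\,d^3p_y=\infty$, which contradicts $\Phi\in L^2$; therefore $\psi_0=0$. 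I expect the main obstacle to be the rigorous handling of the zero--energy limit: establishing the limits $\mathcal{C}_{ij;lm}(k)\to\mathcal{C}_{ij;lm}(0)$ and the validity of the resolvent identity $\psi_0=H_0^{-1}V\psi_0$ at threshold, together with the precise $|p_y|^{-1/2}$ blow--up of $\|F_{p_y}\|_{L^2_x}$ and the control of the regular remainder. The positivity of $\psi_0$, $\phi_0$ and all the kernels is exactly what rules out cancellations between the three channels and inside each fibre.
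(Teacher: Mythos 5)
Your strategy is, at its core, the same as the paper's: positivity of the ground state and of the resolvent kernels, reduction to the Birman--Schwinger operator in the $(12)$ channel, the pole $\mathbb{P}_0/(ak)$ of $W(k)$ from Lemma~\ref{Q@lem:1}, the $|p_y|^{-1/2}$ blow--up of $\|(-\Delta_x+p_y^2)^{-1}V_{12}^{1/2}\phi_0\|_{L^2_x}$, and the logarithmic divergence of $\int_{|p_y|\le\delta}|p_y|^{-3}\,d^3p_y$. The one structural difference is that the paper never passes to the threshold identity $\psi_0=H_0^{-1}V\psi_0$: it keeps $z>0$ throughout and derives the pointwise inequality $\psi_0\ge f(z)\ge0$ of (\ref{Q@e11})--(\ref{Q@e12a}), where $f(z)$ retains only the $V_{23}$ source fed through $W\bigl(\sqrt{p_y^2+z^2}\bigr)$; the contradiction is then $\|f(z)\|\to\infty$ against $\|\psi_0\|=1$. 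This sidesteps every limit you list as an ``obstacle'' (the threshold resolvent identity, the convergence $\mathcal{C}_{ij;lm}(k)\to\mathcal{C}_{ij;lm}(0)$ applied to $\zeta_{lm}$, the fibrewise inversion at $p_y\neq0$).

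The genuine gap is in your source term $g_0=V_{12}^{1/2}H_0^{-1}(V_{13}+V_{23})\psi_0$. Your argument needs $\hat g_0(\cdot,p_y)$ to be continuous and bounded in $L^2_x$ near $p_y=0$ with $c(0)=(\phi_0,\overline{g_0})>0$; otherwise neither the lower bound $\mathrm{const}\cdot|p_y|^{-3/2}$ on the singular term, nor its dominance over the $O(|p_y|^{-1/2}\|\hat g_0(\cdot,p_y)\|_{L^2_x})$ remainder, nor the divergence of $\int_{|p_y|\le\delta}|c(p_y)|^2|p_y|^{-3}\,d^3p_y$ follows. All of this requires $g_0$ to be integrable in $y$, and that is not automatic: under $\overline{R}$3 the potentials $V_{13},V_{23}$ are only in $L^2+L^\infty_\infty$, $\psi_0$ is only in $L^2$, and the kernel of $H_0^{-1}$ on $\mathbb{R}^6$ decays only like $|\xi-\xi'|^{-4}$, so $H_0^{-1}(V_{13}+V_{23})\psi_0$ need not be $L^1$ in $y$ and $\overline{g_0}$ need not exist. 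The paper repairs exactly this point by (i) replacing $\psi_0$ in the source by the compactly supported cutoff $\psi_1=\psi_0\chi_{\{\xi|\,|\xi|\le b\}}$, which is legitimate because everything is positive and one only lowers the source, and (ii) keeping a factor $[H_0+1]^{-1}$, whose kernel decays exponentially, in front of $V_{23}\psi_1$; Lemma~\ref{Q@lem:4} then proves $g\in L^1\cap L^2(\mathbb{R}^3)$ with $\|g\|_1>0$, and Lemma~\ref{Q@lem:3} converts $g\ge0$, $g\in L^1$ into the required divergence. You would need to import both devices for your version to close.
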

We defer the proof to the end of this section. Our next aim is to derive the
inequality (\ref{Q@e11})-(\ref{Q@e12a}).

We use the same Jacobi coordinates as in Sec.~\ref{sec:3} so that (\ref{ay4})
holds.
The full set of coordinates in $\mathbb{R}^6$ is labelled by $\xi$. We shall need
the following trivial technical lemmas.
\begin{lemma}\label{Q@lem:2}
Suppose an operator $A$ is positivity preserving and $\| A\| < 1$. Then 
$(1-A)^{-1}$ is bounded and positivity preserving.
\end{lemma}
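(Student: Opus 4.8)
The plan is to exploit the Neumann series for $(1-A)^{-1}$, which is available precisely because $\|A\| < 1$. First I would observe that the series $\sum_{n=0}^{\infty} A^n$ converges in operator norm, since $\sum_{n=0}^{\infty} \|A\|^n = (1-\|A\|)^{-1} < \infty$, and that its sum equals $(1-A)^{-1}$ by the usual telescoping identity $(1-A)\sum_{n=0}^{N} A^n = 1 - A^{N+1}$ together with $\|A^{N+1}\| \to 0$. This immediately yields boundedness, with the estimate $\|(1-A)^{-1}\| \leq (1-\|A\|)^{-1}$, settling the first assertion.

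For positivity preservation I would argue term by term. The identity $A^0 = 1$ is trivially positivity preserving, and since the composition of two positivity-preserving operators is again positivity preserving (if $f \geq 0$ then $Bf \geq 0$, hence $ABf \geq 0$), an induction shows that every power $A^n$ is positivity preserving. Consequently each partial sum $S_N := \sum_{n=0}^{N} A^n$ maps nonnegative functions to nonnegative functions.

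The only point requiring a moment's care — and the one I would flag as the main, though very mild, obstacle — is the passage to the limit: I must verify that the norm limit of the positivity-preserving operators $S_N$ is itself positivity preserving. This rests on the fact that the cone of nonnegative functions in $L^2$ is closed under norm convergence. Concretely, fix $f \geq 0$; then $S_N f \geq 0$ for every $N$, and since $S_N f \to (1-A)^{-1} f$ in $L^2$, some subsequence converges almost everywhere, forcing $(1-A)^{-1} f \geq 0$ almost everywhere. Combining this with the boundedness established above gives both conclusions of the lemma.
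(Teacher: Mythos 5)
Your proof is correct and follows exactly the paper's own argument, which simply invokes the Neumann series expansion of $(1-A)^{-1}$; you have merely filled in the routine details (term-by-term positivity of $A^n$ and closedness of the cone of nonnegative functions under norm convergence) that the paper leaves implicit.
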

\begin{proof}
A simple expansion of $(1-A)^{-1}$ into von Neumann series. \end{proof}
\begin{lemma}\label{Q@lem:3}
Suppose $g \in L^1 (\mathbb{R}^3) $, $\|g\|_1 >0$  and $g(y) \geq 0$. Then for
all $\epsilon_0 > 0$
\begin{equation}\label{Q@sq1}
    \lim_{z \to + 0} \int_{|p_y|\leq \epsilon_0} d^3 p_y \frac{|\hat
g|^2}{(p_y^2 + z^2)^{3/2}} = \infty
\end{equation}
\end{lemma}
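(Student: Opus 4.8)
The plan is to exploit the continuity and strict positivity of $\hat g$ at the origin, which forces $|\hat g|^2$ to be bounded below on a small ball, and then to show that the kernel $(p_y^2 + z^2)^{-3/2}$ integrates to a logarithmically divergent quantity over that ball as $z \to +0$.

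First I would observe that since $g \in L^1(\mathbb{R}^3)$ the Fourier transform $\hat g$ is continuous and bounded (Riemann--Lebesgue), and since in addition $g \geq 0$ with $\|g\|_1 > 0$,
\[
\hat g(0) = \frac{1}{(2\pi)^{3/2}} \int d^3 y\, g(y) = \frac{\|g\|_1}{(2\pi)^{3/2}} > 0 .
\]
By continuity there are constants $c > 0$ and $0 < r \leq \epsilon_0$ such that $|\hat g(p_y)|^2 \geq c$ for all $|p_y| \leq r$. Discarding the non-negative contribution from the region $r < |p_y| \leq \epsilon_0$ yields the lower bound
\[
\int_{|p_y| \leq \epsilon_0} d^3 p_y\, \frac{|\hat g|^2}{(p_y^2 + z^2)^{3/2}} \geq c \int_{|p_y| \leq r} \frac{d^3 p_y}{(p_y^2 + z^2)^{3/2}} .
\]

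Next I would evaluate the remaining integral in spherical coordinates. Writing $\rho = |p_y|$ and using the elementary antiderivative,
\[
\int_{|p_y| \leq r} \frac{d^3 p_y}{(p_y^2 + z^2)^{3/2}} = 4\pi \int_0^r \frac{\rho^2\, d\rho}{(\rho^2 + z^2)^{3/2}} = 4\pi \Bigl[ \ln\bigl(\rho + \sqrt{\rho^2 + z^2}\bigr) - \frac{\rho}{\sqrt{\rho^2 + z^2}} \Bigr]_0^r .
\]
Evaluating the bracket gives $\ln\bigl(r + \sqrt{r^2 + z^2}\bigr) - r/\sqrt{r^2 + z^2} - \ln z$, which diverges like $-\ln z \to +\infty$ as $z \to +0$, while the remaining terms converge to the finite limit $\ln(2r) - 1$. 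Combining this with the lower bound above establishes (\ref{Q@sq1}).

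The computation is entirely elementary, so there is no serious obstacle; the only point that genuinely requires the hypotheses is securing the uniform positive lower bound on $|\hat g|^2$ near the origin. This is exactly where $g \geq 0$ and $\|g\|_1 > 0$ enter, since together they guarantee $\hat g(0) \neq 0$. For a general $L^1$ function $\hat g$ could vanish at the origin, in which case the integrand would not be bounded below near $p_y = 0$ and the divergence argument would break down.
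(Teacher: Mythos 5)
Your proof is correct and follows essentially the same strategy as the paper's: obtain a uniform positive lower bound for $|\hat g|^2$ on a small ball around the origin (you via continuity of $\hat g$ and $\hat g(0)=\|g\|_1/(2\pi)^{3/2}>0$, the paper via an explicit cosine estimate $\cos(p_y\cdot y)\geq 1/2$ on $|p_y|\leq\pi/(3r)$, $|y|\leq r$), and then invoke the logarithmic divergence of $\int_{|p_y|\leq r}(p_y^2+z^2)^{-3/2}\,d^3p_y$ as $z\to+0$. The only difference is cosmetic: your continuity argument is less quantitative but equally valid, and your explicit antiderivative computation makes the divergence step fully rigorous where the paper merely asserts it.
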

\begin{proof}
Let us set
\begin{equation}\label{Q@sq3}
    J_{\epsilon}(z) = \int_{|p_y| \leq \epsilon} d^3 p_y \frac{1}{(p_y^2 +
z^2)^{3/2}} \left| \int d^3 y e^{i p_y \cdot y} g(y) \right|^2 .
\end{equation}
We have
\begin{equation}\label{Q@sq4}
    J_{\epsilon} (z) \geq \int_{|p_y| \leq \epsilon} d^3 p_y \frac{1}{(p_y^2 +
z^2)^{3/2}} \left| \int d^3 y \;  g(y) \cos{(p_y \cdot y)}\right|^2 .
\end{equation}
Let us fix $r$ so that
\begin{equation}\label{Q@sq4a}
    \int_{|y|>r} d^3 y g(y) = \frac 14 \|g\|_1
\end{equation}
Setting $\epsilon = \min [\epsilon_0 , \pi/(3r)]$ we get
\begin{equation}\label{Q@sq5}
\cos{(p_y \cdot y)} \geq \frac 12  \quad \quad \mathrm{if} \quad |p_y| \leq
\epsilon , |y|\leq r .
\end{equation}
Substituting (\ref{Q@sq5}) and (\ref{Q@sq4a}) into (\ref{Q@sq4}) we get
\begin{equation}\label{Q@sq6}
    J_{\epsilon_0} (z) \geq J_\epsilon (z) \geq \frac{\| g \|_1^2}{64}
\int_{|p_y| \leq \epsilon} d^3 p_y \frac{1}{(p_y^2 + z^2)^{3/2}} .
\end{equation}
The integral in (\ref{Q@sq6}) logarithmically diverges for $z \to +0$. 
\end{proof}

So let us assume that there is a bound state $\psi_0 \in D(H_0)$ at zero energy,
where $\psi_0 >0$ because it is the ground state, see \cite{reed} Sec.~XIII.12.
Then we would have
\begin{equation}\label{Q@e1}
    H_0 \psi_0 = V_{12} \psi_0 + V_{13} \psi_0 + V_{23} \psi_0 ,
\end{equation}
Adding the term $z^2 \psi_0$ (where here and further $z > 0$ ) and acting with
an inverse operator on both sides of (\ref{Q@e1}) gives
\begin{eqnarray}
    \psi_0 = [H_0 + z^2]^{-1}V_{12} \psi_0 + [H_0 + z^2]^{-1} V_{13} \psi_0 +
[H_0 + z^2]^{-1}V_{23} \psi_0 \nonumber\\
    + z^2 [H_0 + z^2]^{-1}\psi_0 . \label{Q@e3a}
\end{eqnarray}
From now we let $z$ vary in the interval $(0,\rho_0/2)$, where $\rho_0 < 1$ was
defined in Lemma~\ref{Q@lem:1}. The operator $[H_0 + z^2]^{-1}$ is positivity
preserving, see, 
for example, \cite{reed}, Example 3 from Sec.~IX.7 in vol. 2 and Theorem~XIII.44
in vol. 4. Thus we obtain the inequality
\begin{equation}\label{Q@e4}
    \psi_0 \geq [H_0 + z^2]^{-1}\sqrt{V_{12}} (\sqrt{V_{12}} \psi_0 )
\end{equation}
Now let us focus on the term $\sqrt{V_{12}} \psi_0 $. Using (\ref{Q@e3a}) we get
\begin{eqnarray}\label{Q@e5}
    \Bigl[1- \sqrt{V_{12}} (H_0 + z^2)^{-1} \sqrt{V_{12}}\Bigr]  \sqrt{V_{12}}
\psi_0 =  \sqrt{V_{12}} [H_0 + z^2]^{-1} V_{13} \psi_0 \nonumber\\
    +  \sqrt{V_{12}} [H_0 + z^2]^{-1}V_{23} \psi_0 + z^2  \sqrt{V_{12}} [H_0 +
z^2]^{-1}\psi_0
\end{eqnarray}
And by Lemma~\ref{Q@lem:2}
\begin{equation}\label{Q@e6}
    \sqrt{V_{12}} \psi_0 \geq  \Bigl[1- \sqrt{V_{12}} (H_0 + z^2)^{-1}
\sqrt{V_{12}}\Bigr]^{-1}  \sqrt{V_{12}} [H_0 + z^2]^{-1} V_{23} \psi_0  .
\end{equation}
The resolvent identity reads
\begin{equation}\label{Q@e668}
[H_0 + z^2]^{-1} - [H_0 + 1]^{-1} = (1-z^2)[H_0 + 1]^{-1} [H_0 + z^2]^{-1} .
\end{equation}
Clearly, for $z \in (0,1)$ the difference on the lhs of (\ref{Q@e668}) is a
positivity preserving operator.
Using this fact we can transform (\ref{Q@e6}) into
\begin{equation}\label{Q@e669}
    \sqrt{V_{12}} \psi_0 \geq  \Bigl[1- \sqrt{V_{12}} (H_0 + z^2)^{-1}
\sqrt{V_{12}}\Bigr]^{-1}  \sqrt{V_{12}} [H_0 + 1]^{-1} V_{23} \psi_0  .
\end{equation}

It is technically convenient to cut off the wave function $\psi_0$ by
introducing
\begin{equation}\label{Q@e7}
\psi_1 (\xi) := \psi_0 (\xi) \chi_{\{\xi|\; |\xi| \leq b\}} ,
\end{equation}
where clearly $\psi_1 \in L^2 \cap L^1 (\mathbb{R}^6)$ and $b>0$ is fixed so
that $\|V_{23} \psi_1\| \neq 0$ (which is always possible since $V_{23} \neq
0$).

Applying again Lemma~\ref{Q@lem:2} we get out of (\ref{Q@e669})
\begin{equation}\label{Q@e6new}
    \sqrt{V_{12}} \psi_0 \geq  \Bigl[1- \sqrt{V_{12}} (H_0 + z^2)^{-1}
\sqrt{V_{12}}\Bigr]^{-1}  \sqrt{V_{12}} [H_0 + 1]^{-1} V_{23} \psi_1 \: .
\end{equation}
Substituting (\ref{Q@e6new}) into (\ref{Q@e4}) gives that for all $z \in (0,
\rho_0 /2)$
\begin{equation}\label{Q@e11}
\psi_0 \geq f(z) \geq 0,
\end{equation}
where
\begin{eqnarray}
    f(z) = [H_0 + z^2]^{-1}\sqrt{V_{12}} \Bigl[1- \sqrt{V_{12}} (H_0 + z^2)^{-1}
\sqrt{V_{12}}\Bigr]^{-1}  \nonumber \\
    \times \sqrt{V_{12}} [H_0 + 1]^{-1} V_{23} \psi_1 \: . \label{Q@e12a}
\end{eqnarray}

Our aim is to prove that $\lim_{z \to +0} \|f(z)\| = \infty$, which would be in
contradiction with (\ref{Q@e11}) because $\psi_0$ is the normalized ground state wave funtion. 
Let us define
\begin{eqnarray}
\Phi (x,y) := [H_0 + 1]^{-1} V_{23} \psi_1 , \label{Q@gaze3}\\
    g(y) := \int d x \; \Phi (x,y) \sqrt{V_{12} (\alpha x)}\phi_0 (x) ,
\label{Q@gaze4}
\end{eqnarray}
where $\phi_0$ is defined in (\ref{refsugg14}). 
\begin{lemma}\label{Q@lem:4}
$g \in L^1 \cap L^2 (\mathbb{R}^3)$ and $\|g\|_1 >0$.
\end{lemma}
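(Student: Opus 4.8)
The plan is to reduce everything to explicit Yukawa kernels and exploit positivity. Write $h(x):=\sqrt{V_{12}(\alpha x)}\,\phi_0(x)$ and $F:=V_{23}\psi_1$, so that $g(y)=\int d^3x\,\Phi(x,y)\,h(x)$ with $\Phi=[H_0+1]^{-1}F$. First I would record the ingredients. By $\overline{R}$1 the factor $\sqrt{V_{12}(\alpha\cdot)}$ is bounded and exponentially decaying, hence lies in every $L^p(\mathbb{R}^3)$; together with $\phi_0\in L^2$ this gives $h\in L^1\cap L^2$, $h\ge 0$. Moreover $h$ is bounded: from $L(0)\phi_0=\phi_0$ one has $\phi_0=\sqrt{V_{12}}\,(-\Delta)^{-1}(\sqrt{V_{12}}\phi_0)$, so $h=V_{12}(\alpha\cdot)\,(-\Delta)^{-1}h$, and the Newtonian potential $(-\Delta)^{-1}h$ of the function $h\in L^1\cap L^2$ is bounded in three dimensions (split the kernel $1/(4\pi\abs{x-x'})$ into its local part, controlled by $\|h\|_2$, and its tail, controlled by $\|h\|_1$); multiplying by the bounded $V_{12}$ keeps $h\in L^\infty$. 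The same identity shows $h\not\equiv 0$, since $h\equiv 0$ would force $L(0)\phi_0=0\neq\phi_0$. Finally $F\ge 0$, $F\in L^1(\mathbb{R}^6)$ with support in $\{\abs{\xi}\le b\}$, and $F\not\equiv 0$ by the choice of $b$.

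For $\|g\|_1>0$ I would use that $[H_0+1]^{-1}$ is positivity improving: its kernel $G_6$ (the resolvent kernel of $-\Delta+1$ on $\mathbb{R}^6$) is strictly positive, so $\Phi(x,y)>0$ everywhere because $F\ge 0$, $F\not\equiv 0$. Since also $h\ge 0$, $h\not\equiv 0$, the integral $g(y)=\int\Phi(\cdot,y)h>0$ for every $y$, whence $g\ge 0$ is not the zero function and $\|g\|_1>0$.

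For $g\in L^1$ I would integrate out $y$ first. Because all integrands are nonnegative, Tonelli gives $\|g\|_1=\int d^3x\,h(x)\int d^3y\,\Phi(x,y)$, and marginalizing the six-dimensional kernel over the $y$-variables collapses it to the three-dimensional Yukawa kernel, $\int d^3y\,G_6\bigl((x-x'),(y-y')\bigr)=G_3(x-x')$ with $G_3(a)=e^{-\abs{a}}/(4\pi\abs{a})$ the kernel of $(-\Delta_x+1)^{-1}$ (in Fourier this is the evaluation of $\widehat{G_6}$ at zero $y$-momentum). Hence $\int d^3y\,\Phi(x,y)=\bigl[(-\Delta_x+1)^{-1}\rho\bigr](x)$ with $\rho(x):=\int d^3y\,F(x,y)\in L^1(\mathbb{R}^3)$ of compact support, and $\|g\|_1\le\|h\|_2\,\|G_3\|_2\,\|\rho\|_1<\infty$ since $G_3\in L^2(\mathbb{R}^3)$.

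The delicate point is $g\in L^2$. \emph{Here the obvious route fails}: $F=V_{23}\psi_1$ need not lie in $L^2(\mathbb{R}^6)$ --- the cutoff $\psi_1$ is only $L^1\cap L^2$ and $V_{23}$ only $L^2+L^\infty_\infty$ --- so one cannot simply invoke boundedness of $[H_0+1]^{-1}$ on $L^2$ to place $\Phi\in L^2$. Instead I would estimate in mixed norms. Writing $N(x',\eta):=\int d^3x\,h(x)G_6\bigl((x-x'),\eta\bigr)$ and applying Minkowski's integral inequality followed by Young's inequality in the $y$-variable yields $\|g\|_{L^2(d^3y)}\le\int d^3x'\,\rho(x')\,\|N(x',\cdot)\|_{L^2(d^3\eta)}$, and a further Minkowski step gives $\|N(x',\cdot)\|_{L^2(d^3\eta)}\le (h*g_6)(x')$, where $g_6(a):=\|G_6\bigl((a,\cdot)\bigr)\|_{L^2(\mathbb{R}^3)}$. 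The function $g_6$ lies in $L^1(\mathbb{R}^3)$: its only singularity is $g_6(a)=\mathcal{O}(\abs{a}^{-5/2})$ as $a\to 0$, which is integrable in three dimensions, and $g_6$ decays exponentially. Since $h\in L^\infty$, one has $h*g_6\in L^\infty$ with $\|h*g_6\|_\infty\le\|h\|_\infty\|g_6\|_1$, and therefore $\|g\|_2\le\|g_6\|_1\,\|h\|_\infty\,\|\rho\|_1<\infty$. The \textbf{main obstacle} is precisely this lack of $L^2$ control on $F$, which forces the mixed-norm detour and the use of the boundedness of $\phi_0$ rather than a one-line resolvent bound.
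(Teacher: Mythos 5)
Your proof is correct, but it takes a different route from the paper on the two integrability claims, and the ``main obstacle'' you identify is not actually there. The point you missed is that $F=V_{23}\psi_1$ \emph{does} lie in $L^2(\mathbb{R}^6)$: since $V_{23}$ is a multiplication operator, $|V_{23}\psi_1|\le|V_{23}\psi_0|$ pointwise, and $V_{23}\psi_0\in L^2(\mathbb{R}^6)$ because $\psi_0\in D(H_0)$ and $V_{23}\in L^2+L^\infty_\infty$ is infinitesimally $H_0$--bounded --- the same fact that makes $H$ self--adjoint on $D(H_0)$. This is exactly the paper's route: $\Phi=[H_0+1]^{-1}F\in L^2(\mathbb{R}^6)$ is immediate from boundedness of the resolvent, $g\in L^2(\mathbb{R}^3)$ then follows by Cauchy--Schwarz in $x$ using $\sqrt{V_{12}}\phi_0\in L^2$, and the only genuinely delicate point is $\Phi\in L^1(\mathbb{R}^6)$, which the paper handles with the pointwise Green's function bound $G_0(\xi,1)\le \frac{4}{9\pi}|\xi|^{-4}e^{-|\xi|/2}$ together with the compact support of $F$; the $L^1$ statement for $g$ then costs only $\|g\|_1\le\|\sqrt{V_{12}}\phi_0\|_\infty\|\Phi\|_1$. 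Your alternatives are valid: the exact marginalization of the six--dimensional kernel to the Yukawa kernel $G_3$ is a clean way to get $g\in L^1$, and the mixed--norm Minkowski argument with $g_6(a)=\|G_0((a,\cdot),1)\|_{L^2(\mathbb{R}^3)}=\mathcal{O}(|a|^{-5/2})\in L^1(\mathbb{R}^3)$ does deliver $g\in L^2$ while needing only $F\in L^1$ with compact support (plus $h\in L^\infty$) rather than $F\in L^1\cap L^2$ --- a marginally more general hypothesis. Your bootstrap proof that $h=\sqrt{V_{12}}\phi_0\in L^\infty$ via $h=V_{12}(-\Delta)^{-1}h$ is also a useful supplement, since the paper asserts $\|\sqrt{V_{12}}\phi_0\|_\infty<\infty$ without proof. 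The positivity argument for $\|g\|_1>0$ is the same as the paper's. You should, however, delete the claim that ``the obvious route fails'': it does not, and the detour, while correct, is unnecessary.
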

\begin{proof}
Following \cite{klaus1} let us denote by $G_0 (\xi - \xi', 1)$ the integral
kernel of $[H_0 +1]^{-1}$. 
We need a rough upper bound on $G_0 (\xi,1)$. Using the formula on p. 262 in
\cite{klaus1} we get
\begin{eqnarray}
(4\pi)^3 |\xi|^4 e^{|\xi|/2} G_0 (\xi , 1) = \int_o^\infty t^{-3} e^{|\xi|/2}
e^{-t|\xi|^2} e^{-1/(4t)} dt \nonumber\\
\leq \int_0^\infty t^{-3} e^{-3/(16t)} dt = \frac{256}9
\end{eqnarray}
Hence,
\begin{equation}\label{Q@bgf}
G_0 (\xi,1) \leq \frac 4{9\pi |\xi|^4}  e^{-|\xi|/2}  .
\end{equation}
Using $\|\sqrt{V_{12}}\phi_0\|_\infty < \infty $ we get $g \in L^1 \cap L^2
(\mathbb{R}^3)$ if $\Phi \in L^1 \cap L^2 (\mathbb{R}^6)$. Because $\Phi \in L^2
(\mathbb{R}^6)$ to prove $\Phi \in L^1 (\mathbb{R}^6)$ it suffices to show that
$\chi_{\{\xi|\; |\xi| \geq 2b\}}\Phi(\xi) \in L^1 (\mathbb{R}^6)$, where $b$ was
defined after Eq.~(\ref{Q@e7}). This follows from (\ref{Q@bgf})
\begin{eqnarray}\label{Q@bgfa}
\chi_{\{\xi|\; |\xi| \geq 2b\}}\Phi(\xi) \leq \chi_{\{\xi|\; |\xi| \geq 2b\}}
\int_{|\xi'|\leq b} d^6 \xi' G_0 (\xi - \xi',1) V_{23}\psi_1(\xi') \nonumber\\
\leq   \chi_{\{\xi|\; |\xi| \geq 2b\}} \frac 4{9\pi (|\xi| - b)^4}  e^{-(|\xi| -
b)/2} \bigl\|V_{23}\psi_1\bigr\|_1  \in L^1 (\mathbb{R}^6)
\end{eqnarray}
From $\Phi (x,y) >0$ it follows that $\|g\|_1 > 0$. \end{proof}

Applying $\mathcal{F}_{12}$ to (\ref{Q@e12a}) we get
\begin{eqnarray}
\fl     \hat f (z) = [-\Delta_x +p_y^2 + z^2]^{-1}\sqrt{V_{12}} \Bigl[1-
\sqrt{V_{12}} (-\Delta_x +p_y^2 + z^2)^{-1} \sqrt{V_{12}}\Bigr]^{-1}  
\nonumber\\
    \sqrt{V_{12}} [-\Delta_x + p_y^2 + 1]^{-1} \widehat{V_{23} \psi_1} .
\label{Q@gaze2}
\end{eqnarray}

From now on $z\in (0,\rho_0/2)$. By Lemma~\ref{Q@lem:1} for $|p_y | < \rho_0 /2$
and $z < \rho_0 /2$
\begin{equation}\label{Q@sobef}
  \fl  \Bigl[1- \sqrt{V_{12}} \Bigl(-\Delta_x +p_y^2 + z^2\Bigr)^{-1}
\sqrt{V_{12}}\Bigr]^{-1} = \frac{\mathbb{P}_0}{a \sqrt{p_y^2 + z^2}} +
\mathcal{Z}\Bigl(\sqrt{p_y^2 + z^2}\Bigr)  ,
\end{equation}
where $a$ and $\phi_0(x)$ are defined in Sec.~\ref{Q@sec:2} and $\mathbb{P}_0$
acts on $u(x,p_y)$ as $\mathbb{P}_0 u(x,p_y) = \phi_0 (x) \int \phi_0 (x')
u(x',p_y)\: d x' $. Substituting (\ref{Q@sobef}) into
(\ref{Q@gaze2}) and denoting for brevity $\chi_0 (p_y) :=
\chi_{\{p_y | \; |p_y| < \rho_0 /2\}}$ we obtain
\begin{equation}\label{Q@e13}
    \chi_0 (p_y) \hat f(z) = \hat f_1 (z) + \hat f_2 (z),
\end{equation}
where
\begin{eqnarray}
    \hat f_1(z) = \chi_0 (p_y) \frac{\hat g(p_y)}{\sqrt{p_y^2 + z^2}} [-\Delta_x
+p_y^2 + z^2]^{-1}\bigl(\sqrt{V_{12}} \phi_0 (x)\bigr) , \label{Q@f1}\\
    \hat f_2(z) = \chi_0 (p_y) [-\Delta_x +p_y^2 + z^2]^{-1}\sqrt{V_{12}}
\mathcal{Z}\Bigl(\sqrt{p_y^2 + z^2}\Bigr) \nonumber\\
    \sqrt{V_{12}} [-\Delta_x + p_y^2 + 1]^{-1} \bigl(\mathcal{F}_{12}V_{23}
\mathcal{F}^{-1}_{12}\bigr) \hat \psi_1  \label{Q@f2aa} ,
\end{eqnarray}
and we have used (\ref{Q@gaze3})-(\ref{Q@gaze4}).
The next lemma follows from the results of Sec.~\ref{sec:3}.
\begin{lemma}\label{Q@lem:5}
$\sup_{z \in (0,\rho_0/2)}\|f_2 (z)\| < \infty$
\end{lemma}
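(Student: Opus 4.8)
The plan is to estimate $\|f_2(z)\|=\|\hat f_2(z)\|$ by exploiting the fact that all operators in (\ref{Q@f2aa}) act in the variable $x$ and are decomposable over $p_y$. Since $\mathcal{F}_{12}$ is unitary, Plancherel gives
\begin{equation}
\| f_2(z)\|^2 = \int_{|p_y| < \rho_0/2} d^3 p_y \; \bigl\| \hat f_2(z)(\cdot, p_y)\bigr\|^2_{L^2(\mathbb{R}^3_x)} ,
\end{equation}
where the cut--off $\chi_0(p_y)$ confines the integration to $|p_y|<\rho_0/2$. For each fixed $p_y$ the fiber $\hat f_2(z)(\cdot,p_y)$ is obtained by applying, to the fixed vector $\widehat{V_{23}\psi_1}(\cdot,p_y)$, the product of the three $x$--fiber operators $\sqrt{V_{12}}[-\Delta_x + p_y^2 + 1]^{-1}$, $\mathcal{Z}(\sqrt{p_y^2 + z^2})$ and $[-\Delta_x + p_y^2 + z^2]^{-1}\sqrt{V_{12}}$. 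I would therefore bound the fiber norm by the product of the three operator norms times $\|\widehat{V_{23}\psi_1}(\cdot,p_y)\|_{L^2_x}$.

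Next I would bound each factor uniformly on the region $|p_y|<\rho_0/2$, $z\in(0,\rho_0/2)$. For the first factor I invoke Lemma~\ref{lem:6}: conjugating $\mathcal{A}_{12}(z)$ by $\mathcal{F}_{12}$ turns it into $[-\Delta_x + p_y^2 + z^2]^{-1}\sqrt{V_{12}(\alpha x)}\,[1 + z + t(p_y)]$, whose fiber norm is bounded by $C_A:=\sup_{z\in(0,1]}\|\mathcal{A}_{12}(z)\|<\infty$; since $1+z+t(p_y)=z+\sqrt{|p_y|}$ for $|p_y|\leq 1$, dividing off this multiplier gives
\begin{equation}
\bigl\| [-\Delta_x + p_y^2 + z^2]^{-1}\sqrt{V_{12}}\bigr\|_{\mathrm{op}} \leq \frac{C_A}{z + \sqrt{|p_y|}} .
\end{equation}
For the middle factor, $|p_y|<\rho_0/2$ and $z<\rho_0/2$ force $0<\sqrt{p_y^2+z^2}<\rho_0$, so Lemma~\ref{Q@lem:1} yields $\|\mathcal{Z}(\sqrt{p_y^2+z^2})\|\leq \sup_{k\in(0,\rho_0)}\|\mathcal{Z}(k)\|=:M<\infty$. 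The third factor is harmless: by $\overline{R}$1 the potential $V_{12}$ is bounded, so $\|\sqrt{V_{12}}[-\Delta_x + p_y^2 + 1]^{-1}\|_{\mathrm{op}}\leq\|\sqrt{V_{12}}\|_\infty=:C_V$. Finally, from $|\widehat{V_{23}\psi_1}(x,p_y)|\leq (2\pi)^{-3/2}\int|V_{23}\psi_1(x,y)|\,dy=:(2\pi)^{-3/2}h(x)$, the compact support of $\psi_1$ together with $V_{23}\psi_1\in L^1\cap L^2(\mathbb{R}^6)$ (as established in Lemma~\ref{Q@lem:4}) and Cauchy--Schwarz in $y$ give $\|h\|_{L^2_x}<\infty$, hence the $p_y$--independent bound $\|\widehat{V_{23}\psi_1}(\cdot,p_y)\|_{L^2_x}\leq (2\pi)^{-3/2}\|h\|_{L^2_x}$.

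Collecting the four estimates, I expect to obtain
\begin{equation}
\| f_2(z)\|^2 \leq \frac{(C_A M C_V)^2\,\|h\|^2_{L^2_x}}{(2\pi)^3}\int_{|p_y| < \rho_0/2}\frac{d^3 p_y}{(z + \sqrt{|p_y|})^2} \leq C \int_{|p_y| < \rho_0/2}\frac{d^3 p_y}{|p_y|} < \infty ,
\end{equation}
uniformly in $z$, because $(z+\sqrt{|p_y|})^{-2}\leq |p_y|^{-1}$ and $|p_y|^{-1}$ is integrable near the origin in $\mathbb{R}^3$. The decisive point, and the only genuine obstacle, is the first factor: its bare operator norm blows up as $p_y,z\to 0$, and the mechanism that tames it is exactly the $B_{12}$--regularisation of Lemma~\ref{lem:6}, which trades the singularity for the factor $(z+\sqrt{|p_y|})^{-1}$ whose square is integrable in three dimensions. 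The genuinely divergent rank--one contribution $\mathbb{P}_0/(ak)$ has already been split off into $\hat f_1(z)$, so what remains in $\hat f_2(z)$ is precisely the regular part $\mathcal{Z}$, for which this bound suffices.
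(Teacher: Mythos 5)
Your proof is correct, and it reaches the bound by a route that differs from the paper's in one substantive respect. The paper also factorizes $\hat f_2(z)$ into three pieces and uses Lemma~\ref{Q@lem:1} for the middle one, but it distributes the regularizing multiplier $1+z+t(p_y)=z+\sqrt{|p_y|}$ between the two outer factors: the left factor becomes the uniformly bounded $\mathcal{A}(z)$ of Lemma~\ref{lem:6}, while the right factor $\mathcal{C}(z)$ carries $[z+\sqrt{|p_y|}]^{-1}$ together with the full (non-decomposable) operator $\mathcal{F}_{12}V_{23}\mathcal{F}_{12}^{-1}$, and is controlled by an explicit Hilbert--Schmidt kernel computation involving $\widehat{V_{23}^{1/2}}$ and the integral $\int_{|p_y|\le\rho_0/2} d^3p_y/\bigl(|p_y|(z+\sqrt{|p_y|})^2\bigr)$. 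You instead keep the entire singular factor $(z+\sqrt{|p_y|})^{-1}$ with the left resolvent, never estimate any operator that mixes $p_y$, and treat $\widehat{V_{23}\psi_1}$ as a fixed vector whose $L^2_x$-fibers are uniformly bounded in $p_y$ thanks to the compact support of $\psi_1$ and $V_{23}\psi_1\in L^1_yL^2_x$; the convergence then rests on $\int_{|p_y|<\rho_0/2}(z+\sqrt{|p_y|})^{-2}\,d^3p_y\le\int|p_y|^{-1}\,d^3p_y<\infty$. What your version buys is the avoidance of the kernel computation for $\mathcal{C}(z)$ (and of the hypothesis $\widehat{V_{23}^{1/2}}\in L^2$ at this point), at the price of invoking the boundedness of $V_{12}$ from $\overline{R}$1 to dispose of the inner factor $\sqrt{V_{12}}[-\Delta_x+p_y^2+1]^{-1}$ --- which is legitimate here, since $\overline{R}$1 is in force. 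Two details worth stating explicitly if you write this up: the passage from $\sup_{z}\|\mathcal{A}_{12}(z)\|<\infty$ to the fiberwise bound $C_A/(z+\sqrt{|p_y|})$ uses that the operator norm of a decomposable operator is the (essential) supremum of its fiber norms, and that $1+z+t(p_y)=z+\sqrt{|p_y|}$ only on $|p_y|\le1$, which is guaranteed because $\rho_0<1$.
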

\begin{proof}
Let us rewrite (\ref{Q@f2aa}) in the form
\begin{equation}\label{Q@f2a}
f_2 (z) = \mathcal{A}(z) \mathcal{B}(z) \mathcal{C}(z) \psi_0 ,
\end{equation}
where
\begin{eqnarray}
\mathcal{A}(z) = \chi_0 (p_y) [-\Delta_x +p_y^2 + z^2]^{-1}\sqrt{V_{12}}
[1+t(p_y)+z]  , \\
\mathcal{B}(z) =  \chi_0 (p_y) \mathcal{Z}\Bigl(\sqrt{p_y^2 + z^2}\Bigr) , \\
\mathcal{C}(z) =   \chi_0 (p_y) \sqrt{V_{12}} [-\Delta_x + p_y^2 + 1]^{-1}
[1+t(p_y)+z]^{-1} \bigl(\mathcal{F}_{12}V_{23} \mathcal{F}^{-1}_{12}\bigr) ,
\end{eqnarray}
and  $t(p_y)$ is defined as in (\ref{ay669}).
Note that by (\ref{Q@expans}) $\mathcal{Z} \Bigl(\sqrt{p_y^2 + z^2}\Bigr)$ is a
difference of two operators each of which commutes with the operator of
multiplication by $[1+t(p_y)+z]$. We need to show that each of the three
operators in the product in (\ref{Q@f2a}) are uniformly norm--bounded for $z \in
(0,\rho_0/2)$. That $\sup_{z \in (0,\rho_0/2)}\|\mathcal{B} (z)\| < \infty$
follows from Lemma~\ref{Q@lem:1}. That $\sup_{z \in (0,\rho_0/2)}\|\mathcal{A}
(z)\|, \|\mathcal{C} (z)\| < \infty$ follows from the proofs of Lemmas
\ref{lem:6}, \ref{lem:8} in Sec.~\ref{sec:3}. Let us, however, repeat the
argument here. Taking into account that $0 < z < \rho_0/2 < 1$ we obtain
\begin{eqnarray}
\fl \|\mathcal{A} (z)\| = \Bigl\| \chi_0 (p_y) \bigl[-\Delta_x +p_y^2 +
z^2\bigr]^{-1}\sqrt{V_{12}} \bigl[1+t(p_y)+z\bigr] \Bigr\| \nonumber \\
\fl \leq \Bigl\| \chi_0 (p_y) \bigl[-\Delta_x +p_y^2 + z^2\bigr]^{-1}\sqrt{V_{12}}
\sqrt{|p_y|} \Bigr\| +
z \Bigl\| \chi_0 (p_y) \bigl[-\Delta_x +p_y^2 + z^2\bigr]^{-1}\sqrt{V_{12}}
\Bigr\| \nonumber\\
\fl \leq \Bigl\| \chi_0 (p_y) \bigl[-\Delta_x +p_y^2 + z^2\bigr]^{-1}\sqrt{V_{12}}
\sqrt{|p_y|} \Bigr\| +
z\Bigl\| \bigl[-\Delta_x +z^2\bigr]^{-1}\sqrt{V_{12}} \Bigr\|
\label{Q@secondnorm}
\end{eqnarray}
It is trivial to estimate the squares of the norms on the rhs if one uses the
explicit expressions for the operator kernels. For example,
\begin{eqnarray}
\Bigl\| \chi_0 (p_y) \bigl[-\Delta_x +p_y^2 + z^2\bigr]^{-1}\sqrt{V_{12}}
\sqrt{|p_y|} \Bigr\|^2 \nonumber \\
\leq \frac 1{(4\pi)^2} \sup_{|p_y| < \rho_0 /2} |p_y|\int \int
\frac{e^{-2|p_y||x-x'|} V_{12}(\alpha x')}{|x-x'|^2} d^3 x d^3 x' = \frac {c
c'}{4\pi}  < \infty ,
\end{eqnarray}
where $c, c'$ are defined in (\ref{aijacts9})--(\ref{aijacts10}).
The second norm in (\ref{Q@secondnorm}) is estimated similarly and the result is
that $\mathcal{A} (z)$ is uniformly norm--bounded for $z \in (0,\rho_0/2)$.
Using (\ref{four1}) we can write the integral kernel of $\mathcal{C} (z)$ as
\begin{eqnarray}
\mathcal{C}(z) (x,p_y;x',p'_y) = \frac {\chi_0 (p_y) }{2^{7/2}\pi^{5/2}
\gamma^3} \left[z+\sqrt{|p_y|}\right]^{-1} V_{12}^{1/2} (\alpha x)
\nonumber \\
\times \frac{e^{-\sqrt{p_y^2 + z^2} |x-x'|}}{|x-x'|} 
\exp{\Bigl\{i\frac{\beta}{\gamma} x' \cdot (p_y - p'_y)\Bigr\}}
\widehat{V^{1/2}_{23}} ((p_y - p'_y)/\gamma) . \label{Q@four27}
\end{eqnarray}
Estimating $\| \mathcal{C} (z)\|^2$ through the square of the Hilbert--Schmidt
norm results in
\begin{equation}\label{Q@zhirn1}
\| \mathcal{C} (z)\|^2 \leq \frac{c c' \tilde c}{2^{7}\pi^{5} } \int_{|p_y| \leq
\rho_0/2} \frac{d^3 p_y}{|p_y| (z + \sqrt{|p_y|})^2} ,
\end{equation}
where $\tilde c$ is defined in (\ref{four8}). From (\ref{Q@zhirn1}) it follows
that $\sup_{z \in (0,\rho_0/2)}\|\mathcal{C} (z)\| < \infty$. \end{proof}

The last Lemma needed for the proof of Theorem~\ref{Q@th:1} is
\begin{lemma}\label{Q@lem:6}
$\lim_{z \to 0}\|f_1 (z)\| = \infty$.
\end{lemma}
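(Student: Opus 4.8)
I need to show $\lim_{z\to 0}\|f_1(z)\| = \infty$, where from $(\ref{Q@f1})$
$$\hat f_1(z) = \chi_0(p_y)\frac{\hat g(p_y)}{\sqrt{p_y^2+z^2}}\,[-\Delta_x+p_y^2+z^2]^{-1}\bigl(\sqrt{V_{12}}\,\phi_0(x)\bigr).$$
The key point is that $\hat f_1$ has the structure "(function of $p_y$ only) times (an $x$-vector depending on $p_y$)". So for each fixed $p_y$ the $x$-part is a single $L^2(\mathbb{R}^3_x)$ element, and $\|f_1(z)\|^2$ factorizes as an integral over $p_y$ of $|\hat g(p_y)|^2/(p_y^2+z^2)$ multiplied by the squared $L^2_x$-norm of $[-\Delta_x+p_y^2+z^2]^{-1}(\sqrt{V_{12}}\phi_0)$.

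**Plan.** First I would use the Plancherel identity in $y$ (i.e. $\mathcal{F}_{12}$ is unitary) to write $\|f_1(z)\|^2 = \int_{|p_y|<\rho_0/2} d^3 p_y\,\frac{|\hat g(p_y)|^2}{p_y^2+z^2}\,\bigl\|[-\Delta_x+p_y^2+z^2]^{-1}\sqrt{V_{12}}\phi_0\bigr\|_{L^2_x}^2$. Then I would extract a lower bound on the inner $x$-norm. The cleanest route is to recall that $\phi_0$ solves the Birman--Schwinger equation at $k=0$, namely $L(0)\phi_0=\phi_0$, which by $(\ref{Q@bms})$ means $\sqrt{V_{12}}(-\Delta_x)^{-1}\sqrt{V_{12}}\phi_0=\phi_0$. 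Equivalently, setting $\eta_0 := (-\Delta_x)^{-1}\sqrt{V_{12}}\phi_0$ gives the zero-energy resonance function with $\sqrt{V_{12}}\eta_0 = \phi_0$. As $p_y^2+z^2\to 0$ the resolvent $[-\Delta_x+p_y^2+z^2]^{-1}\sqrt{V_{12}}\phi_0$ converges (in the appropriate sense) to this $\eta_0$, which is a nonzero element, so the inner norm stays bounded below by a positive constant $m^2>0$ uniformly for $p_y,z$ small. Thus
$$\|f_1(z)\|^2 \geq m^2 \int_{|p_y|<\rho_0/2} d^3 p_y\,\frac{|\hat g(p_y)|^2}{p_y^2+z^2}.$$

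**Finishing.** The remaining integral is not yet of the divergent form in Lemma~\ref{Q@lem:3} (that lemma has a $(p_y^2+z^2)^{3/2}$ in the denominator and concerns a cruder integral), so I would instead argue directly that $\int_{|p_y|<\rho_0/2}|\hat g(p_y)|^2(p_y^2+z^2)^{-1}d^3p_y\to\infty$ as $z\to+0$. By Lemma~\ref{Q@lem:4}, $g\in L^1$ with $\|g\|_1>0$, so $\hat g$ is continuous and $\hat g(0)=\int g\,d^3y = \|g\|_1 >0$ (using $g\geq 0$). Hence $|\hat g(p_y)|^2\geq c_0>0$ on a small ball $|p_y|\leq\epsilon$, and there
$$\int_{|p_y|\leq\epsilon}\frac{|\hat g(p_y)|^2}{p_y^2+z^2}\,d^3 p_y \geq c_0\int_{|p_y|\leq\epsilon}\frac{d^3p_y}{p_y^2+z^2} = c_0\,4\pi\!\int_0^{\epsilon}\frac{\rho^2\,d\rho}{\rho^2+z^2},$$
and the last integral tends to $4\pi\epsilon$ — which is \emph{finite}. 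This reveals that the naive lower bound on the inner $x$-norm is \emph{not} enough: the $1/(p_y^2+z^2)$ singularity alone is integrable in $d^3p_y$.

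**Where the real difficulty lies.** So the main obstacle is that I must \emph{not} bound the inner $x$-norm below by a constant, but rather exhibit its own blow-up as $p_y,z\to 0$. The point is that $[-\Delta_x+p_y^2+z^2]^{-1}\sqrt{V_{12}}\phi_0$ behaves like the zero-energy resonance function $\eta_0\sim |x|^{-1}$ at infinity, which is \emph{not} in $L^2_x$; its $L^2_x$-norm diverges as $(p_y^2+z^2)^{-1/2}$. Concretely, using the explicit kernel $e^{-\sqrt{p_y^2+z^2}|x-x'|}/(4\pi|x-x'|)$ and the resonance relation $\sqrt{V_{12}}\eta_0=\phi_0$ with $\int\sqrt{V_{12}}\phi_0\,d^3x = (\phi_0,\sqrt{V_{12}})>0$ (which is exactly the quantity $\sqrt{4\pi a}$ from $(\ref{Q@whatisa})$), one finds the leading behavior $\bigl\|[-\Delta_x+p_y^2+z^2]^{-1}\sqrt{V_{12}}\phi_0\bigr\|_{L^2_x}^2 \sim C/\sqrt{p_y^2+z^2}$ with $C>0$. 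Plugging this in gives
$$\|f_1(z)\|^2 \gtrsim \int_{|p_y|<\epsilon} \frac{|\hat g(p_y)|^2}{(p_y^2+z^2)^{3/2}}\,d^3p_y,$$
which is \emph{exactly} the integral of Lemma~\ref{Q@lem:3}; that lemma then delivers the logarithmic divergence as $z\to+0$. Thus the hard step is pinning down the $(p_y^2+z^2)^{-1/2}$ growth of the inner resolvent norm — i.e.\ recognizing that the zero-energy resonance's non-$L^2$ tail is precisely what produces the extra half-power needed to match Lemma~\ref{Q@lem:3}.
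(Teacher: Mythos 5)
Your proposal is correct and follows essentially the same route as the paper: the decisive step in both is that $\bigl\|[-\Delta_x+p_y^2+z^2]^{-1}\sqrt{V_{12}}\,\phi_0\bigr\|_{L^2_x}^2$ grows like $(p_y^2+z^2)^{-1/2}$ because of the non-square-integrable $|x|^{-1}$ tail of the zero-energy resonance (with coefficient controlled by $(\phi_0,\sqrt{V_{12}})>0$), which upgrades the $p_y$-integrand to $|\hat g|^2(p_y^2+z^2)^{-3/2}$ so that Lemmas~\ref{Q@lem:3} and \ref{Q@lem:4} apply. The only difference is that the paper makes this lower bound fully explicit through the elementary pointwise kernel inequality (\ref{Q@zabyv})--(\ref{Q@expin}), whereas you state the $(p_y^2+z^2)^{-1/2}$ asymptotics without carrying out that (routine) estimate.
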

\begin{proof}
We get
\begin{eqnarray}
    \fl \| \hat f_1(z) \|^2 = \frac 1{4\pi^2} \int_{|p_y| \leq \rho_0/2} d p_y
\frac{|\hat g(p_y)|^2}{p_y^2 + z^2} \int dx \int dx' \int dx'' \frac
{e^{-\sqrt{p_y^2 + z^2}|x-x'|}}{|x-x'|} \nonumber \\
 \fl   \times \frac {e^{-\sqrt{p_y^2 + z^2}|x-x''|}}{|x-x''|}
\bigl(\sqrt{V_{12}}(\alpha x') \phi_0 (x')\bigr) \bigl(\sqrt{V_{12}}(\alpha x'')
\phi_0 (x'')\bigr)  .
\end{eqnarray}
The are constants $R_0, C_0 >0$ such that
\begin{equation}\label{Q@expin}
\int d^3 x' \frac{e^{-\delta|x-x'|}}{|x-x'|} \sqrt{V_{12}(\alpha x')}\phi_0 (x')
\geq C_0 \frac{e^{-2\delta|x|}}{|x|}\chi_{\{x|\; |x|\geq R_0\}}
\end{equation}
for all $\delta >0$. Indeed, the following inequality holds for all $R_0 >0$
\begin{equation}\label{Q@zabyv}
\chi_{\{x|\; |x|\geq R_0\}} \frac{e^{-\delta|x-x'|}}{|x-x'|} \chi_{\{x' |\;
|x'|\leq R_0\}} \geq \frac{e^{-2\delta|x|}}{2|x|}\chi_{\{x|\; |x|\geq R_0\}}  .
\end{equation}
Substituting (\ref{Q@zabyv}) into the lhs of (\ref{Q@expin}) we obtain
(\ref{Q@expin}), where
\begin{equation}
C_0 = \frac 12 \int_{|x'|\leq R_0} d^3 x' \sqrt{V_{12}(\alpha x')}\phi_0 (x')
\end{equation}
and one can always choose $R_0$ so that $C_0 >0$. Using (\ref{Q@expin}) we get
\begin{equation}
 \| \hat f_1(z) \|^2  \geq c \int_{|p_y| \leq \frac{\rho_0}2} d p_y \frac{|\hat
g(p_y)|^2}{(p_y^2 + z^2)^{3/2}} ,
\end{equation}
where $c>0$ is a constant. Now the result follows from Lemmas~\ref{Q@lem:3},
\ref{Q@lem:4}.  \end{proof}

The proof of Theorem~\ref{Q@th:1} is now trivial.
\begin{proof}[Proof of Theorem~\ref{Q@th:1}]
A bound state at threshold should it exist must satisfy inequality (\ref{Q@e11})
for all $z \in (0,\rho_0/2)$. Thus $\|f(z)\|$ and, hence, $\|\chi_0 \hat f(z)\|$
are uniformly bounded for $z \in (0,\rho_0/2)$. By (\ref{Q@e13}) and
Lemmas~\ref{Q@lem:5}, \ref{Q@lem:6} this leads to a contradiction.  \end{proof}

\section{Example of a Three--Particle Zero Energy Resonance and Physical Applications}\label{Q@sec:4}

Suppose that $\overline{R}$2 is fulfilled. Let us rewrite (\ref{Q@xc31aa}) using additional coupling
constants $\Theta , \Lambda >0$ 
\begin{equation}
H(\Theta, \Lambda) = [-\Delta_x - V_{12}] -\Delta_y - \Theta V_{13} - \Lambda
V_{23}  .
\end{equation}
For simplicity, let us require that $V_{ik} \geq 0$ and $V_{ik} \in C^\infty_0
(\mathbb{R}^3)$. Let $\Theta_{cr}, \Lambda_{cr}$ denote the 2--particle coupling
constant thresholds for particle pairs 1,3 and 2,3 respectively. On one hand,
using a variational argument it is easy to show that there exists $\epsilon >0$
such that $H(\Theta, \Lambda) > 0$ if $\Theta, \Lambda \in [0,\epsilon]$ (that
is in this range $H(\Theta, \Lambda)$ has neither negative energy bound states
nor a zero energy resonance) 
\cite{gridnev,martin}. On the other hand, from the Yafaev's rigorous proof of the existence of the Efimov effect \cite{yafaev} we know 
that $H(\Theta_{cr}, \Lambda)$ has an infinite number of negative energy bound states for $\Lambda \in [0,\Lambda_{cr})$ because 
in this case two of the binary subsystems have zero energy resonances. 
So let us fix $\Lambda = \epsilon$ and let $\Theta$ vary in
the range $[\epsilon, \Theta_{cr}]$. The energy of the ground state
$E_{gr}(\Theta) = \inf \sigma \Bigl(H(\Theta, \epsilon)\Bigr)$ is a continuous
function of $\Theta$. $E_{gr}(\Theta)$ decreases monotonically at the points
where $E_{gr}(\Theta) <0$. Because $E_{gr}(\epsilon) = 0$ there must exist
$\Theta_0 \in (\epsilon, \Theta_{cr})$ such that $E_{gr}(\Theta) <0$ for $\Theta
\in (\Theta_0 , \Theta_{cr})$ and $E_{gr}(\Theta_0) = 0$.

Summarizing, $H(\Theta_0 , \epsilon)$ is at the 3--particle coupling constant
threshold. By Theorem~\ref{Q@th:1} $H(\Theta_0 , \epsilon)$ has a zero energy
resonance but not a zero energy bound state. If $\psi_{gr}(\Theta, \xi) \in
L^2(\mathbb{R}^6)$ is a wave function of the ground state defined on the
interval $(\Theta_0, \Theta_{cr})$ then for $\Theta \to \Theta_0 + 0$ the wave
function must totally spread (see Sec.~\ref{sec:2}). Which means that for any
$R>0$
\begin{equation}
\lim_{\Theta \to \Theta_0 + 0} \int_{|\xi| < R}  |\psi_{gr}(\Theta, \xi)|^2 d\xi
\: \to 0 .
\end{equation}
Remember also \cite{klaus2} that if the particles 1 and 2 would be bound with the energy $e_{12} < 0$ then 
the 3--particle system cannot have a square integrable ground state wave function at the energy $e_{12}$.
Though in the paper we restricted our analysis to the case of non--positive pair interactions, with additional effort one can show that 
the main results also hold without this restriction. 

In physics there is now an increased interest to the systems, which exhibit unusually large spatial extension and form the so--called \textit{halo}. 
Under halo one usually means \cite{fedorov} that the substantial part of the wave function 
is located in the classically forbidden region so that some interparticle distances exceed by far the range of the interaction. The interest in such systems 
started with the study of light atomic nuclei but the concept has now penetrated atomic and molecular physics \cite{fedorov}.A typical example 
of halo systems are weakly bound nuclei $^6$He and $^{11}$Li, where one finds a pronounced three--particle structure consisting of a tightly bound cluster 
($^4$He and $^9$Li respectively) and two neutrons. These nuclei treated as a three--particle system form 
the so--called Borromean structure (the term originates from the Italian heraldic), which means that if one of the particles are removed, the remaining two fall apart. 
The two neutrons orbiting around the core form a halo and the effective size of these systems is by far larger than that of normal stable nuclei having 
nearly the same mass. 

Numerical calculations \cite{vaagen} showed that a deeply lying resonance in the two--neutron interaction is important in reproducing the neutron halo. Even such a 
``naive'' model, where two neutron are treated as a bound particle called \textit{dineutron} \cite{vaagen} is still effectively being used today. 
The extensive three--body calculations often approximate the neutron--neutron interaction by a simple Gaussian, where the parameters are tuned so as to accommodate 
a low lying resonance. In the physics literature (see, for example \cite{vaagen,fedorov}) one often uses the asymptotic of the bound state wave function due to Merkuriev 
\cite{merkuriev} $\psi \simeq \rho^{-5/2} e^{-k\rho}$, where $\rho$ is the hyperradius and $k$ is proportional to square root of the binding energy. 
Here one should be warned against relying on the validity of this asymptotic behaviour near the threshold. 
The results obtained here show that this can be misleading. Indeed, the normalized sequence of functions 
$c_n  \rho^{-5/2} e^{-k_n \rho} $, where $c_n := \|\rho^{-5/2} e^{-k_n \rho}\|^{-1}$
totally spreads in the limit of vanishing binding $k_n \to 0$. As we know now the wave function would not totally spread unless 
one pair of particles would have a zero energy resonance. It is worth mentioning that exactly at the zero energy threshold the wave function does not have an exponential fall off. 
From the Green's function bound \cite{klaus2} it follows that 
$\psi_{gr} \geq \rho^{-4}$, where $\psi_{gr}$ is the normalized ground state at zero energy threshold, which can be chosen positive. The results presented here 
contribute to setting the general theory of halos on a rigorous footing. 

Another example of spatially extended Borromean structures are the so--called Efimov states. 
The Efimov states predicted by V. Efimov \cite{vefimov}
attracted considerable interest due to their bizarre and counter-intuitive
properties. These states start to appear when at least two of the binary subsystems 
either have very large scattering lengths or bound states at nearly zero energy. In the limiting case when 
 at least two of the binary subsystems have zero energy resonances the number of such states is infinite. In that case the binding energy of the $n$-th state 
decreases exponentially with $n$, and bound states attain enormous spatial extension. (The infinite sequence of Efimov states $\psi_n$ totally spreads because 
$\psi_n \wto 0$ due to orthogonality of the states and $\sup_n \|H_0 \psi_n\|< \infty$, see Lemmas~1, 3). 
The existence of this effect 
was demonstrated rigorously by Yafaev in \cite{yafaev}, see also \cite{sobol}. Remarkably, the three--particle system has an infinite number of bound states in spite 
of the fact that all its subsystems are unbound. 
These states evaded any experimental evidence for 35 years 
since their prediction until Kraemer \textit{et al.} \cite{efimov} reported on their discovery in an ultracold gas of cesium atoms. 
The present paper predicts extended halo--like states for three atoms near zero energy threshold, if one pair of atoms has a large scattering length (that is it is close 
to the zero energy resonance). Therefore, we advise experimentalists and theoreticians to look for such states in ultracold gas mixtures prepared through the appropriate 
Feshbach resonance tuning \cite{chin}. In analogy with Efimov states they can be, probably, indirectly detected through the 3--body recombination loss.

\ack 

The author would like to thank Prof. Walter Greiner for the warm hospitality at FIAS.  

\section*{References}

\end{document}